\newcommand{\Order}{\mathrm{O}}
\newcommand{\poly}{\mathop{\mathrm{poly}}\nolimits}
\newtheorem{observation}[theorem]{\textbf{Observation}}   
\newtheorem{assumption}{\textbf{Assumption}}
\title{Constructing Large Matchings via Query Access to a Maximal Matching Oracle}
\author{Lidiya Khalidah binti Khalil and Christian Konrad}
\institute{Department of Computer Science, University of Bristol, Bristol, UK \\ \texttt{\{lb17727,christian.konrad\}@bristol.ac.uk}}
\definecolor{lightred}{rgb}{1,0.8,0.8}
\begin{document}

 \maketitle
 
 \begin{abstract}
  Multi-pass streaming algorithm for \textsf{Maximum Matching} have been studied since more than 15 years and 
various algorithmic results are known today, including $2$-pass streaming algorithms that break the $1/2$-approximation barrier,
and $(1-\epsilon)$-approximation streaming algorithms that run in 
$\Order(\poly \frac{1}{\epsilon})$ passes in bipartite graphs and in $\Order( (\frac{1}{\epsilon})^{\frac{1}{\epsilon}})$ 
or $\Order(\poly (\frac{1}{\epsilon}) \cdot \log n)$ passes 
in general graphs, where $n$ is the number of vertices of the input graph. However, proving impossibility results for such algorithms has so far been elusive, and, for example, 
even the existence of $2$-pass small space streaming algorithms with approximation factor $0.999$ has not yet been ruled out.

The key building block of all multi-pass streaming algorithms for \textsf{Maximum Matching} is the \textsc{Greedy} matching algorithm.
Our aim is to understand the limitations of this approach: How many passes are required if the algorithm solely relies 
on the invocation of the \textsc{Greedy} algorithm?

In this paper, we initiate the study of lower bounds for restricted families of multi-pass streaming algorithms for \textsf{Maximum Matching}.
We focus on the simple yet powerful class of algorithms that in each pass run \textsc{Greedy} on a vertex-induced subgraph of the input graph.
In bipartite graphs, we show that $3$ passes are necessary and sufficient to improve on the trivial approximation factor of $1/2$: We give 
a lower bound of $0.6$ on the approximation ratio of such algorithms, which is optimal. We further show that $\Omega( \frac{1}{\epsilon})$ 
passes are required for computing a $(1-\epsilon)$-approximation, even in bipartite graphs. Last, the considered class of algorithms is not 
well-suited to general graphs: We show that $\Omega(n)$ passes are required in order to improve on the trivial approximation factor of $1/2$.
 \end{abstract}

\section{Introduction}
\label{sec:introduction}
The \textsc{Greedy} matching algorithm is the key building block of most published streaming algorithms for approximate 
\textsf{Maximum Matching} \cite{fkmsz05,m05,ekms12,kmm12,ag13,ehm16,kt17,k18}. Given a graph $G=(V, E)$, \textsc{Greedy} 
scans the set of edges $E$ in arbitrary order and inserts the current edge $e \in E$ into an initially empty matching 
$M$ if possible, i.e., if both endpoints of $e$ are not yet matched by an edge in $M$. \textsc{Greedy} produces
a maximal matching, which is known to be at least half as large as a matching of largest size. 

The \textsc{Greedy} matching algorithm is well-suited for implementation in the {\em streaming model of computation}. A streaming 
algorithm processing a graph $G=(V, E)$ with $|V| = n$ receives a potentially adversarially ordered sequence of the edges of the input 
graph, and the objective is to solve a graph problem using as little space as possible. Many graph problems require 
space $\Omega(n \log n)$ to be solved in the streaming model \cite{sw15}, and streaming algorithms that use space 
$\Order(n \poly \log n)$ are referred to as {\em semi-streaming algorithms}. 
Multi-pass streaming algorithms process the input stream multiple times. Observe that \textsc{Greedy} constitutes a 
one-pass semi-streaming algorithm for \textsf{Maximum Matching} with approximation factor $\frac{1}{2}$.

The \textsf{Maximum Matching} problem is the most studied graph problem in the streaming model, and despite intense research efforts,
the \textsc{Greedy} algorithm is the best one-pass streaming algorithm known today, even if space $\Order(n^{2-\delta})$ is 
allowed, for any $\delta > 0$. Performing multiple passes over the input allows improving the approximation factor.
The main questions addressed in the literature are: (1) What can be achieved in $p$ passes, for small $p$ (e.g. $p \in \{2, 3 \}$), 
and (2) How many passes
are required in order to obtain a $(1-\epsilon)$-approximation, for any $\epsilon > 0$. See Table~\ref{tab:known-results} for 
an overview of the currently best results.

\begin{table}[h]
\begin{center}
 \begin{tabular}{|llll|l|}
 \hline
 \# passes & Approximation & det/rand & Reference & See also \\
 \hline 
 \multicolumn{4}{|l|}{\textbf{Bipartite Graphs}} & $ $ \\    
  $1$ & $\frac{1}{2}$ & deterministic & \textsc{Greedy}, folklore & \\
  $2$ & $2-\sqrt{2} \approx 0.5857$ & randomized & Konrad \cite{k18} & \cite{kmm12,ehm16,kt17} \\
  $3$  & $0.6067$   & randomized & Konrad \cite{k18} & \cite{ehm16,kt17} \\
  $\frac{2}{3\epsilon}$ & $\frac{2}{3} - \epsilon$ & deterministic & Kale and Tirodkar \cite{kt17} & \cite{fkmsz05} \\
  $\Order(\frac{1}{\epsilon^2} \log \log \epsilon)$  & $1-\epsilon$ & deterministic & Ahn and Guha \cite{ag13} & \cite{ekms12} \\  
  \multicolumn{4}{|l|}{\textbf{General Graphs}} & $ $  \\
  $1$ & $\frac{1}{2}$ & deterministic & \textsc{Greedy}, folklore & \\
  $2$ & $0.53125$ & deterministic & Kale and Tirodkar  \cite{kt17} & \cite{kmm12} \\
  $\frac{1}{\epsilon}^{\Order(\frac{1}{\epsilon})}$ & $1-\epsilon$ & deterministic & Tirodkar \cite{t18} & \cite{m05}  \\
  $\Order(\frac{1}{\epsilon^4} \log n)$ & $1-\epsilon$ & deterministic & Ahn and Guha \cite{ag13} & $ $ \\
  \hline
 \end{tabular}
 
 \vspace{0.1cm}
 
 \caption{State of the art semi-streaming algorithms for \textsf{Maximum Matching}. \label{tab:known-results}}
\end{center}
 \end{table}

 Only few lower bounds are known: 
 We know that one-pass semi-streaming algorithms cannot have an approximation factor larger than $1-\frac{1}{e}$ \cite{k13} 
 (see also \cite{gkk12}). The only multi-pass lower bound known addresses the exact version of \textsf{Maximum Matching}, showing
 that computing a maximum matching in $p$ passes requires space $n^{1+\Omega(1/p)} / p^{O(1)}$ \cite{go16}. No lower bound 
 is known for multiple passes and approximations, and, for example, the existence of a $2$-pass $0.999$-approximation 
 semi-streaming algorithm has not yet been ruled out.
  
 The \textsc{Greedy} algorithm is the key building block of all algorithms referenced in Table~\ref{tab:known-results} (including 
 those mentioned in the ``See also'' column). In many cases, the presented algorithms collect edges by {\em solely} executing 
 \textsc{Greedy} on specific subgraphs in each pass and output a large matching computed from the edges produced by 
 \textsc{Greedy}. In this paper, we are interested in the limitations of this approach: How large a matching can be computed
 if \textsc{Greedy} is executed at most $p$ times?
 
 Known streaming algorithms apply \textsc{Greedy} in different ways. For example, the $2$-pass and $3$-pass algorithms
 by Konrad \cite{k18} run \textsc{Greedy} on randomly sampled subgraphs that depend on a previously computed maximal matching.
 The multi-pass algorithms by Ahn and Guha \cite{ag13} maintain vertex weights $\in [0, 1]$ over the course of the algorithm and run \textsc{Greedy}
 on a {\em threshold subgraph}, i.e., on the set of edges $uv$ so that the sum of the current weights associated with $u$ and $v$
 is at most $1$. The algorithm by Eggert et al. \cite{ekms12} runs \textsc{Greedy} on an edge-induced subgraph in order to find augmenting
 paths.
 
 In this paper, we initiate the study of lower bounds for restricted families of multi-pass streaming algorithms for 
 \textsf{Maximum Matching} that are based on \textsc{Greedy}. We start this line of research by addressing the 
 probably simplest and most natural approach, which is nevertheless surprisingly powerful:
 the class of deterministic algorithms that run \textsc{Greedy} on a vertex-induced subgraph in each pass.
 Two known streaming algorithms fit our model: 
 
 \begin{enumerate}
  \item A $3$-pass $0.6$-approximation streaming algorithm for bipartite graphs that is implicit in \cite{fkmsz05}, 
  explicitly mentioned in \cite{kmm12}, and analyzed in \cite{kt17}. Given a bipartite graph $G=(A, B, E)$, the algorithm
  first computes a maximal matching in $G$, i.e., $M \gets \textsc{Greedy}(G)$. Then, the algorithm attempts to find length-$3$ augmenting
  paths by invoking \textsc{Greedy} twice more: $M_L \gets \textsc{Greedy}(G[ A(M) \cup \overline{B(M)}])$, where $A(M)$ are the matched 
  $A$-vertices and $\overline{B(M)}$ are the unmatched $B$-vertices. Last, $M_R \gets \textsc{Greedy}(\overline{A(M)}, B')$, where
  $B' \subseteq B(M)$ are those matched $B$ vertices that are endpoints in length-$2$ paths in $M_L  \cup M$. Kale and Tirodkar 
  showed that $M \cup M_L \cup M_R$ contains a $0.6$-approximate matching \cite{kt17}. We will denote this algorithm by \textsc{3RoundMatch}.
  
  \item The $(1-\epsilon)$-approximation $\Order(\frac{1}{\epsilon^5})$-passes streaming algorithm for bipartite graphs 
  by Eggert et al. \cite{ekms12} can be adapted to fit our model using $\Order(\frac{1}{\epsilon^6})$ invocations of \textsc{Greedy}. 
 \end{enumerate}
 
 We abstract this approach as a game
 between a player and an oracle: Let $G$ be a graph with vertex set $V$. The player initially knows $V$. In each round $i$
 the player sends a query \textsf{query}($V_i$) to the oracle, where $V_i \subseteq V$. The oracle returns a maximal matching
 in the vertex-induced subgraph $G[V_i]$. For this model to yield lower bounds for the streaming model, we impose that the oracle
 is {\em streaming-consistent}, i.e., there exists a stream of edges $\pi$ so that the oracle's answers to 
 the queries $(\textsf{query}(V_i))_i$ equal runs of \textsc{Greedy} on the respective substream of edges $G[V_i]$ of $\pi$ (see 
 preliminaries for a more detailed definition). We denote this model as the {\em vertex-query model} (as opposed to an
 edge-query model, where the player may ask for maximal matchings in a subgraph spanned by a subset of edges).

 \begin{figure}[h]
\centering

\begin{tikzpicture}
\node [circle, draw, minimum size=1cm, fill=lightred] (player) at (-3,0) {Player};
\node [circle, draw, minimum size=1cm, fill=yellow] (oracle) at (3,0) {Oracle};

\path[->]
	(player) edge [bend left=25, above] node {\textsf{query}($V_r$)} (oracle)
	(oracle) edge [bend left=25, below] node {response: maximal matching in $G[V_r]$} (player);
\end{tikzpicture}
\caption{Illustration of the game between the player and oracle in the vertex-query model.}
\end{figure}
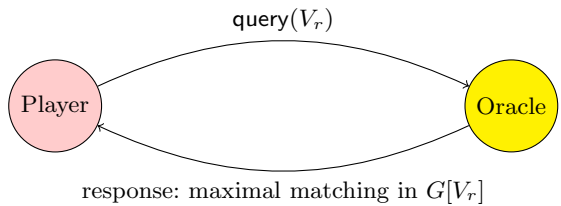

\vspace{0.1cm} \noindent \textbf{Our Results.} 
In bipartite graphs, we show that at least $3$ rounds are required to improve on the approximation factor 
 of $1/2$, and we give a lower bound of $0.6$ on the approximation factor of $3$ round algorithms. This is optimal, as demonstrated by 
 the previously mentioned algorithm \textsc{3RoundMatch}. We also show that $\Omega(\frac{1}{\epsilon})$ rounds are required
 for computing a $(1-\epsilon)$-approximation. This polynomial lower bound is in line with the $\poly \frac{1}{\epsilon}$ rounds 
 upper bound by Eggert et al. \cite{ekms12}. Last, we demonstrate that our query model
 is not well-suited to general graphs: We show that improving on a factor of $1/2$ requires $\Omega(n)$ rounds.

\vspace{0.1cm} \noindent \textbf{Further Related Work.}
Besides the adversarial one-pass and multi-pass streaming models, \textsf{Maximum Matching} has also been
studied in the random order \cite{kmm12, k18, gkms19, abbms19, fhmrr20, b20} and the insertion-deletion settings \cite{k15,ccehmmv16,akly16,dk20}. 
In the random order model, where edges arrive in uniform random order,
Konrad et al. \cite{kmm12} were the first to give a semi-streaming algorithms with approximation ratio above $1/2$. 
Very recently, Bernstein showed that an approximation ratio of $2/3$ can be achieved in random order streams \cite{b20}. In light
of the lower bound by Kapralov \cite{k13}, this result separates the adversarial and the random order settings.
In insertion-deletion streams, edges that have previously been inserted may be deleted again. Assadi et al. \cite{akly16} showed that,
up to sub-polynomial factors, space $n^{2-3\epsilon}$ is necessary and sufficient for computing a $n^\epsilon$-approximation (see \cite{dk20} for a slightly improved lower bound).

Many works allow only query access to the input graph. For example, cross-additive queries, bipartite independent set queries, additive queries,
cut-queries, and edge-detection queries have been considered \cite{gk00,abkrs04,ck08,s13,bhrrs18,rsw18,ab19}, however, mainly for graph reconstruction problems. Very recently, linear queries and or-queries have been considered for graph connectivity \cite{ack20}.

\vspace{0.1cm} \noindent \textbf{Outline.} In Section~\ref{sec:prelim}, we give notation and definitions. We also define the vertex-query model
and provide a construction mechanism that ensures that our oracles are streaming-consistent. Then, in Section~\ref{sec:few-rounds} we 
prove that $3$ rounds are required to improve on $1/2$ and give a lower bound of $0.6$ on the approximation
ratio achievable in three rounds. In Section~\ref{sec:multi-rounds}, we show that $\Omega(\frac{1}{\epsilon})$ rounds
are needed for computing a $(1-\epsilon)$-approximation, and in Section~\ref{sec:general-graphs} we show that improving
on $\frac{1}{2}$ in general graphs requires $\Omega(n)$ rounds. Finally, we conclude in Section~\ref{sec:conclusion} and give open questions.

\section{Preliminaries}\label{sec:prelim}

\vspace{0.1cm} \noindent \textbf{Matchings.} Let $G=(V, E)$ be a graph with $|V| = n$. A {\em matching} $M \subseteq E$ is a subset of vertex-disjoint edges.
Matching $M$ is {\em maximal} if for every $e \in E \setminus M: M \cup \{e\}$ is not a matching. A {\em maximum matching} 
is one of largest cardinality. If the size of a matching $M$ is $n/2$, i.e., it matches all vertices of the graph, then 
$M$ is a {\em perfect matching}.

\vspace{0.1cm} \noindent \textbf{Notation.} We write $V(M)$ to denote the set of vertices incident to the edges of a matching $M$.
For a subset of vertices $V' \subseteq V$, we denote by $G[V']$ the {\em vertex-induced subgraph} of $G$ by vertices $V'$, i.e.,
$G[V'] = (V', (V' \times V') \cap E)$. For a set of edges $E' \subseteq E$, we denote by $OPT(E')$ the size of a maximum matching
in the subgraph of $G$ spanned by the edges $E'$. For an integer $n$, we define $[n] := \{1, 2, \dots, n\}$.

\vspace{0.1cm} \noindent \textbf{The Vertex-query Model.} In the \emph{vertex-query model}, a player and an oracle
play a rounds-based matching game on a vertex set $V$ of size $n$ that is initially known to both parties.  
Over the course of the game, the oracle makes up a graph $G=(V, E)$. The objective of the player is
to learn a large matching in $G$. The way the player learns edges is as follows:

In each round $1 \le i \le r$, where $r$ is the total number of rounds played, 
the player submits a query $\textsf{query}(V_i)$ to the oracle, for some $V_i \subseteq V$. The oracle 
then determines a set of edges $M_i$, which is guaranteed to be a maximal matching in the vertex-induced
subgraph $G[V_i]$. Observe that in doing so, the oracle not only commits to the fact that $M_i \subseteq E$,
but also that the vertices $V_i \setminus V(M_i)$ form an independent set (which follows from the fact that
$M_i$ is maximal). Furthermore, we impose that the answers to all queries are consistent with graph $G$
and that $G$ has a perfect matching. 

After the $r$ query rounds, the player reports a largest matching $M_P$ 
that can be formed using the edges $\cup_{i \le r} M_i$. The {\em approximation ratio} of the solution 
obtained is $|M_P| / (\frac{1}{2} n)$.


We are interested in oracles that are consistent with the streaming model. We say that an oracle is {\em streaming-consistent}, 
if there exists an ordering $\pi$ of the edges $E$ so that, for every round $i$, $M_i$ is produced by 
running \textsc{Greedy} on the substream of $\pi$ consisting of the edges of $G[V_i]$. We will ensure that all our oracles
are streaming-consistent.

\vspace{0.1cm} \noindent \textbf{Construction of Streaming-consistent Oracles.}
We will construct streaming-consistent oracles as follows. Upon query $V_1$, the oracle answers with $M_1$ and
places $M_1$ in the beginning of the stream $\pi$. Next, given query $V_i$, for some $i \ge 2$, the oracle first runs 
\textsc{Greedy} on the substream of $\pi$ consisting of the edges $G[V_i]$ which produces an intermediate matching $M'$, 
thereby attempting to match $V_i$ using edges of previous matchings $\cup_{j < i} M_j$. The oracle then extends $M'$
to a matching $M_i$. Edges $M_i \setminus M'$ are then introduced at the end of the stream $\pi$. 
This construction procedure guarantees that our oracles are streaming-consistent. Furthermore, it allows us to simplify 
our arguments, since it is enough to restrict our considerations to queries with the following property:

\begin{observation}\label{obs:queries}
 Suppose that the oracle is constructed as above. Then, given the sequence of queries $V_1, \dots, V_r$ and matchings
 $M_1, \dots, M_r$, there exists a sequence of queries $\tilde{V}_1, \dots, \tilde{V}_r$ that produces matchings 
 $\tilde{M}_1, \dots, \tilde{M}_r$ such that:
 \begin{itemize}
  \item The player learns the same set of edges, i.e., for every $i \le r: \bigcup_{j \le i} M_j = \bigcup_{j \le i} \tilde{M}_j$, and
  \item No query $\tilde{V}_i$ contains a pair of vertices $u,v$ such that $uv \in \cup_{j < i} \tilde{M}_j$. 
 \end{itemize}
\end{observation}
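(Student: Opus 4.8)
The plan is to prove Observation~\ref{obs:queries} constructively: for each round $i$ I would define an explicit replacement query $\tilde V_i$ in terms of the data of the original run, and then verify by induction on $i$ that feeding $\tilde V_1,\dots,\tilde V_r$ to the oracle (with an appropriate choice of how the oracle extends intermediate matchings) yields responses $\tilde M_1,\dots,\tilde M_r$ satisfying the two bullets, and in fact rebuilds exactly the same stream $\pi$ as the original run.

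First I would isolate two elementary facts about the construction of streaming-consistent oracles described above. At the moment query $V_i$ is processed, the stream $\pi$ consists precisely of the edges of $\bigcup_{j<i} M_j$ (everything appended during rounds $1,\dots,i-1$), so the intermediate matching $M'_i$ obtained by running \textsc{Greedy} on $\pi$ restricted to $G[V_i]$ is a maximal matching of the graph on $V_i$ with edge set $\{uv \in \bigcup_{j<i} M_j : u,v \in V_i\}$; moreover the edges appended to $\pi$ at the end of round $i$ are exactly $M_i \setminus M'_i$. Since the oracle extends $M'_i$ to $M_i$, we have $M'_i \subseteq M_i$ and $M'_i \subseteq \bigcup_{j<i} M_j$, hence the set of edges known to the player after round $i$ equals $\big(\bigcup_{j\le i-1} M_j\big) \cup (M_i \setminus M'_i)$.

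The key definition is $\tilde V_i := V_i \setminus V(M'_i)$. Because $M'_i$ is a maximal matching among the previously revealed edges lying inside $V_i$, the set $\tilde V_i$ contains no edge of $\bigcup_{j<i} M_j$ whatsoever; once the induction has established $\bigcup_{j<i} \tilde M_j = \bigcup_{j<i} M_j$, this is precisely the second bullet. For the response, I would show that $M_i \setminus M'_i$ is itself a maximal matching of $G[\tilde V_i]$: its edges avoid $V(M'_i)$ since $M'_i \subseteq M_i$, and any edge of $G[\tilde V_i]$ with both endpoints unmatched by $M_i \setminus M'_i$ would also be unmatched by $M_i$, contradicting maximality of $M_i$ in $G[V_i]$. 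Consequently, when the oracle is asked $\tilde V_i$, running \textsc{Greedy} on the current stream restricted to $G[\tilde V_i]$ returns the empty intermediate matching (no old edge lies inside $\tilde V_i$), and the oracle may legitimately extend it to $\tilde M_i := M_i \setminus M'_i$, appending exactly the edges $M_i \setminus M'_i$, in the same order, that it appended in round $i$ of the original run. An induction on $i$ (the base case $i=1$ is immediate, since then $M'_1 = \emptyset$ and $\tilde V_1 = V_1$) then shows that after every round the stream and the union of revealed matchings agree with those of the original run, which gives $\bigcup_{j\le i}\tilde M_j = \bigcup_{j\le i} M_j$ for all $i$, i.e.\ the first bullet; streaming-consistency of the modified run is automatic because it is produced by the very same construction procedure.

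I expect the one point needing care to be bookkeeping rather than combinatorics: the matching $M'_i$ used to define $\tilde V_i$ is read off from the original run, so one must simultaneously check that it is also the intermediate matching that arises in the modified run — which it is, precisely because the two runs maintain the identical stream $\pi$. Since the replacement alters neither the graph $G$ the oracle commits to nor the multiset (and order) of edges appended in each round (only the hidden intermediate matching shrinks to $\emptyset$), streaming-consistency and the existence of a perfect matching in $G$ are inherited with no extra work.
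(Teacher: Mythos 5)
Your argument is correct and is exactly the reasoning the paper leaves implicit (the Observation is stated there without an explicit proof): taking $\tilde V_i = V_i \setminus V(M'_i)$ and $\tilde M_i = M_i \setminus M'_i$, where $M'_i$ is the intermediate matching Greedy builds from previously streamed edges, is precisely the intended consequence of the oracle-construction mechanism. Your checks that $M_i \setminus M'_i$ remains maximal in $G[\tilde V_i]$, that $\bigcup_{j\le i}\tilde M_j = \bigcup_{j\le i} M_j$, and that the stream $\pi$ (hence streaming-consistency) is unchanged are the right bookkeeping and complete the argument.
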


We can therefore assume that the player never includes a pair of vertices $u,v$ into a query so that the edge $uv$ is contained
in a previous answer from the oracle.

\section{Lower Bound for Few Round Algorithms in Bipartite Graphs} \label{sec:few-rounds}

In this section, we show that the player cannot produce an approximation ratio better than $\frac{1}{2}$ in two rounds, even on 
bipartite graphs. We also show that three rounds do not allow for an approximation ratio better than $0.6$, which is achieved
by the algorithm \textsc{3RoundMatching}.

In order to keep track of the information 
learned by the player, we will make use of {\em structure graphs}, which we discuss first. 

\subsection{Structure Graphs}
Observe that when the oracle answers the query $\textsf{query}(V_i)$ and returns a maximal matching $M_i$, 
the player not only learns that the edges $M_i$ are contained in the input graph $G$, but also learns that
the vertices $V_i \setminus V(M_i)$ form an independent set in $G$ (due to the maximality of $M_i$). 
We maintain the structure learned by the player and the structure committed to by the oracle (which do not have to be 
identical) using {\em structure graphs}:

\begin{definition}[Structure graph]
A 4-tuple $(A, B, E, F)$ is a \textbf{bipartite structure graph} if:
\begin{itemize}
\item $A, B$ are disjoint sets of vertices,
\item $E, F$ are disjoint sets of edges such that $(A, B, E)$ and $(A, B, F)$ are bipartite graphs,
\item The structure graph admits a perfect matching, i.e., there exists a set of edges $M^*$ such that 
$M^* \cap F = \emptyset$ and $M^*$ is a perfect matching in the bipartite graph $(A, B, E \cup M^*) \ .$
\end{itemize}
\end{definition}

From the perspective of the player, the set $E$ corresponds to the edges returned by the oracle so far, i.e., $E = \cup_{j \le i} M_j$,
and the set $F$ corresponds to guaranteed {\em non-edges}, i.e., $F = \cup_{j \le i} C(V_i \setminus V(M_i))$, where $C(V')$ denotes
a biclique (respecting the bipartition $A,B$) among the vertices $V'$. 

In the following, we will denote the structure graph after round $i$ learned by the player by 
$\tilde{H}_i = (A, B, \tilde{E}_i, \tilde{F}_i)$, i.e., $\tilde{E}_i = \cup_{j \le i} M_j$ and 
$\tilde{F}_i = \cup_{j \le i} C(V_i \setminus V(M_i))$. The oracle will also maintain a sequence of structure graphs 
$(H_i)_i$ with $H_i = (A, B, E_i, F_i)$ such that $H_i$ {\em dominates} $\tilde{H}_i$, for every $ 1\le i \le r$. We say that 
a structure graph $H=(A, B, E, F)$ dominates a structure graph $\tilde{H} = (A, B, \tilde{E}, \tilde{F})$, if 
$\tilde{E} \subseteq E$ and $\tilde{F} \subseteq F$. This notion allows the oracle to commit to edges and non-edges
that the player has not yet learned. This domination property allows us to simplify our arguments. 

In our lower bound arguments, we make use of the following two assumptions:

\begin{assumption} \label{ass:1}
 After round $i$, the player knows the structure graph $H_i$.
\end{assumption}
This is a valid assumption since $H_i$ dominates $\tilde{H}_i$ and thus contains at least as much information as 
$\tilde{H}_i$. This assumption therefore 
only strengthens the player. Furthermore, we will also assume a slightly strengthened property of the property 
discussed in Observation~\ref{obs:queries}:

\begin{assumption} \label{ass:2}
 For every $1 \le i \le r$, we assume that query $V_i$ does not contain a pair of vertices $u,v \in V_i$ such that $uv \in E_{i-1}$.
\end{assumption}

This is a valid assumption, since if such a pair $u,v$ of vertices existed in $V_i$, the oracle could simply match $u$ to 
$v$ in $M_i$ and the algorithm would not learn any new information.

Last, observe that the approximation ratio of the player's strategy is completely determined by $H_r$, the oracle's structure 
graph after the last round. Since $H_r$ dominates $\tilde{H}_r$, the player's largest matching is 
of size at most $OPT(E_r)$. Since by definition of a structure graph, $H_r$ admits a perfect matching,
the approximation ratio achieved is $2 \cdot OPT(E_r) / n$.

\subsection{Lower Bound for Two Rounds}
Assume that $n$ is a multiple of $4$. The player and the oracle play the matching game on a bipartite vertex set $V = A \ \dot{\cup} \ B$ with 
$|A| = |B| = n/2$. Consider the structure graph:
$$H_1 = (A_{in} \cup A_{out}, B_{in} \cup B_{out}, M, A_{out} \times B_{out}) \ ,$$
where $|A_{in}| = |A_{out}| = |B_{in}| = |B_{out}| = n / 4$, and $M$ is a perfect matching between $A_{in}$ and $B_{in}$.
Observe that there exists an $M^*$ outside $A_{out} \times B_{out}$ such that $M^*$ is a perfect matching in $(A, B, M \cup M^*)$, 
namely, $M^*$ consists of the two perfect matchings $M_L^*$ connecting $B_{out}$ to $A_{in}$ and $M_R^*$ connecting 
$B_{in}$ to $A_{out}$. See Figure~\ref{fig:structure-graph} for an illustration.

\begin{figure}[!t]
\begin{center}
\begin{minipage}{0.47 \textwidth}
 \begin{figure}[H]
\centering
\begin{tikzpicture}[scale=0.6]
\tikzstyle{vertex} = [circle, fill=black, inner sep=0pt, minimum size=0.25cm]
\tikzstyle{black line} = [line width=1.3pt, color=black]
\tikzstyle{blue dash} = [color=blue, line width=1.3pt, dashed]
 
\node [vertex][label=above:$B_{out}$] (1) at (-3.5,3.5) {};
\node [vertex][label=above:$A_{in}$] (2) at (-1.5,3.5) {};
\node [vertex][label=above:$B_{in}$] (3) at (0.5,3.5) {};
\node [vertex][label=above:$A_{out}$] (4) at (2.5,3.5) {};
\node at (-0.5,-1.8) {$M$};
\node at (-2.5,-1.8) {$M_L^*$};
\node at (1.5,-1.8) {$M_R^*$};
\node [vertex] (5) at (-3.5,2) {};
\node [vertex] (6) at (-1.5,2) {};
\node [vertex] (7) at (0.5,2) {};
\node [vertex] (8) at (2.5,2) {};
\node [vertex] (9) at (-3.5,0.5) {};
\node [vertex] (10) at (-1.5,0.5) {};
\node [vertex] (11) at (0.5,0.5) {};
\node [vertex] (12) at (2.5,0.5) {};
\node [vertex] (13) at (-3.5,-1) {};
\node [vertex] (14) at (-1.5,-1) {};
\node [vertex] (15) at (0.5,-1) {};
\node [vertex] (16) at (2.5,-1) {};

\draw [black line] (2) edge (3);
\draw [black line] (6) edge (7);
\draw [black line] (10) edge (11);
\draw [black line] (14) edge (15);

\draw [blue dash] (1) edge (2);
\draw [blue dash] (3) edge (4);
\draw [blue dash] (5) edge (6);
\draw [blue dash] (7) edge (8);
\draw [blue dash] (9) edge (10);
\draw [blue dash] (11) edge (12);
\draw [blue dash] (13) edge (14);
\draw [blue dash] (15) edge (16);
\end{tikzpicture}
\caption{Illustration of the structure graph $H_1$ on a graph on $16$ vertices. The matching $M$ is half the size of the matching $M^* = M^*_L \cup M^*_R$.} 
\label{fig:structure-graph}
\end{figure}
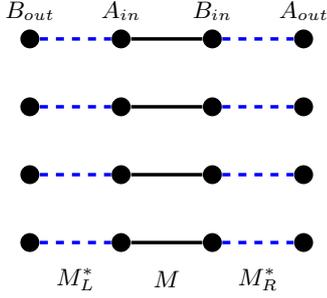
\end{minipage} \hfill
\begin{minipage}{0.47 \textwidth}
\begin{figure}[H]
\centering
\begin{tikzpicture}[scale=0.6]
\tikzstyle{vertex} = [circle, fill=black, inner sep=0pt, minimum size=0.25cm]
\tikzstyle{vertex2} = [circle, fill=red, inner sep=0pt, minimum size=0.25cm]
\tikzstyle{black line} = [line width=1.0pt, color=black]
\tikzstyle{red line} = [line width=1.5pt, color=red]
\tikzstyle{blue dash} = [color=blue, line width=1.0pt, dashed]
 
\node [vertex2][label=above:$B_{out}$] (1) at (-3.5,3.5) {};
\node [vertex2][label=above:$A_{in}$] (2) at (-1.5,3.5) {};
\node [vertex][label=above:$B_{in}$] (3) at (0.5,3.5) {};
\node [vertex][label=above:$A_{out}$] (4) at (2.5,3.5) {};
\node at (-0.5,-1.8) {$M$};
\node at (-2.5,-1.8) {$M_L^*$};
\node at (1.5,-1.8) {$M_R^*$};
\node [vertex] (5) at (-3.5,2) {};
\node [vertex2] (6) at (-1.5,2) {};
\node [vertex] (7) at (0.5,2) {};
\node [vertex] (8) at (2.5,2) {};
\node [vertex2] (9) at (-3.5,0.5) {};
\node [vertex2] (10) at (-1.5,0.5) {};5
\node [vertex] (11) at (0.5,0.5) {};
\node [vertex2] (12) at (2.5,0.5) {};
\node [vertex] (13) at (-3.5,-1) {};
\node [vertex] (14) at (-1.5,-1) {};
\node [vertex2] (15) at (0.5,-1) {};
\node [vertex2] (16) at (2.5,-1) {};

\draw [black line] (2) edge (3);
\draw [black line] (6) edge (7);
\draw [black line] (10) edge (11);
\draw [black line] (14) edge (15);

\draw [blue dash] (1) edge (2);
\draw [blue dash] (3) edge (4);
\draw [blue dash] (5) edge (6);
\draw [blue dash] (7) edge (8);
\draw [blue dash] (9) edge (10);
\draw [blue dash] (11) edge (12);
\draw [blue dash] (13) edge (14);
\draw [blue dash] (15) edge (16);

\draw [red line] (10) edge (15);
\draw [red line] (1) edge (2);
\draw [red line] (6) edge (9);
\end{tikzpicture}
\caption{Matching $M_2$ (in red) returned by the oracle. The red vertices constitute $A_2 \cup B_2$, i.e., the vertices of the second query. 
The case $|A_2^{in}| \ge |B_2^{in}|$ is illustrated here. We see that no edges from $B_{in} \times A_{out}$ are returned, and 
that $M_2$ does not allow us to increase the size of $M$. \label{fig:two-rounds}}
\end{figure}
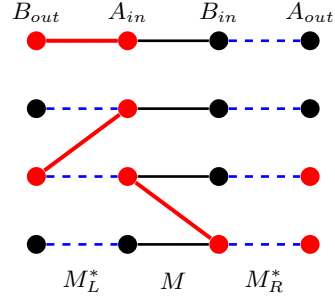
\end{minipage}
\end{center}
\end{figure}
We have: 

\begin{lemma}\label{lem:domination}
 There is a structure graph isomorphic to $H_1$ that dominates $\tilde{H}_1$.
\end{lemma}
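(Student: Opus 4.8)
The plan is to show that, whatever set $V_1$ the player queries in the first round, the oracle can choose its answer $M_1$ so that the resulting player structure graph $\tilde{H}_1 = (A, B, M_1, C(V_1\setminus V(M_1)))$ is dominated by a structure graph of exactly the shape of $H_1$: one in which $A$ and $B$ are each split into an ``in'' part and an ``out'' part of size $n/4$, the edge set is a perfect matching between the two ``in'' parts, and the non-edge set is the complete bipartite graph between the two ``out'' parts. Write $a_1 = |V_1\cap A|$ and $b_1 = |V_1\cap B|$; after a brief WLOG reduction to $a_1\le b_1$ (justified by the symmetry of $H_1$ under exchanging the roles of $A$ and $B$), let the oracle answer with an arbitrary matching $M_1$ of size $m_1 := \min(a_1, n/4)$ between $V_1\cap A$ and $V_1\cap B$. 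I would then fix the partition: choose $A_{in}$ with $|A_{in}| = |A_{out}| = n/4$ and $A_{in}$ containing all $m_1$ endpoints of $M_1$ in $A$ (possible since $m_1\le n/4$), symmetrically for $B$, and let $M$ be any perfect matching between $A_{in}$ and $B_{in}$ extending $M_1$. This produces the candidate $H := (A_{in}\cup A_{out},\, B_{in}\cup B_{out},\, M,\, A_{out}\times B_{out})$.

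It then remains to verify three things. First, $H$ is a valid structure graph isomorphic to $H_1$: the four parts have size $n/4$, $M$ is a perfect matching between the ``in'' parts and is disjoint from the biclique $A_{out}\times B_{out}$ between the ``out'' parts, and — exactly as for the canonical $H_1$ — the matching $M^* = M_L^*\cup M_R^*$ connecting $B_{out}$ to $A_{in}$ and $B_{in}$ to $A_{out}$ is a perfect matching of $A\cup B$ avoiding $A_{out}\times B_{out}$, so $H$ admits a perfect matching. Second, $H$ dominates $\tilde{H}_1$: the inclusion $M_1\subseteq M$ holds by construction, so the only non-trivial part is $C(V_1\setminus V(M_1))\subseteq A_{out}\times B_{out}$. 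Third, $M_1$ is a legitimate oracle answer, i.e., a maximal matching in $G[V_1]$ for the (partial) graph the oracle commits to via $H$; since within-$A$ and within-$B$ pairs are automatically non-edges of a bipartite graph, this again reduces to showing that all cross pairs among $V_1\setminus V(M_1)$ lie in the non-edge set $A_{out}\times B_{out}$. Thus the second and third points both collapse to the single claim $C(V_1\setminus V(M_1))\subseteq A_{out}\times B_{out}$.

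The heart of the proof — and the only step I expect to require care — is this last inclusion, which I would establish by a two-case analysis on $m_1 = \min(a_1, n/4)$. If $m_1 = a_1$, then $M_1$ covers all of $V_1\cap A$, so $(V_1\setminus V(M_1))\cap A = \emptyset$ and hence $C(V_1\setminus V(M_1)) = \emptyset \subseteq A_{out}\times B_{out}$. If instead $m_1 = n/4 < a_1$ (so $n/4 < a_1 \le b_1$), then $|A_{in}| = n/4 = m_1$ forces $A_{in}$ to equal exactly the set of matched $A$-endpoints of $M_1$, so every vertex of $(V_1\setminus V(M_1))\cap A$ lies in $A\setminus A_{in} = A_{out}$; symmetrically $B_{in}$ consists exactly of the matched $B$-endpoints, so $(V_1\setminus V(M_1))\cap B\subseteq B_{out}$, and therefore $C(V_1\setminus V(M_1))\subseteq A_{out}\times B_{out}$. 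This exhausts the cases and, together with the verifications above, proves that $H$ is a structure graph isomorphic to $H_1$ that dominates $\tilde{H}_1$.
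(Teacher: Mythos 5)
Your proposal is correct and follows essentially the same route as the paper: exploit the freedom to relabel (choose) $A_{in},A_{out},B_{in},B_{out}$ and the perfect matching $M$ so that the answered matching $M_1$ lies inside $M$ and every queried vertex left unmatched is either on a single side of the bipartition or lands in the ``out'' parts, whence $C(V_1\setminus V(M_1))\subseteq A_{out}\times B_{out}$. The only difference is presentational -- you fix $M_1$ of size $\min(|V_1\cap A|,n/4)$ first and then build the partition and extend to $M$, whereas the paper fixes the partition first and answers with $M$ restricted to the query -- and your two-case check makes explicit the alignment of $M$ with the query that the paper's ``clearly maximal'' remark leaves implicit.
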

\begin{proof}
 Denote the first query by $A_1, B_1$ ($A_1 \subseteq A$, and $B_1 \subseteq B$). We will argue that we 
 can relabel the sets $A_{in}, A_{out}, B_{in}, B_{out}$ so that 
 $H_1$ dominates $\tilde{H}_1$:
 
 If $A_1 \le n/4$ then let $A_{in}$ be an arbitrary subset of the $A$ vertices of size $n/4$ that 
 contains $A_1$, and let $A_{out}$ be the remaining $A$-vertices. If $A_1 > n/4$ then let $A_{out}$ be an arbitrary subset of $A$ vertices
 of size $n/4$ that contains $A \setminus A_1$, and let $A_{in}$ be the remaining $A$-vertices. Proceed similarly for $B_1$. 
 The oracle returns the subset $M_1 \subseteq M$ where each edge has one endpoint in $A_1$ and one endpoint in $B_1$, which is
 clearly maximal given that edges in $A_{out} \times B_{out}$ are forbidden.
\end{proof}

Since $OPT(M) = |M| = \frac{1}{4}n$, Lemma~\ref{lem:domination} implies the unsurprising fact that no one round algorithm
has an approximation ratio better than $\frac{2 \cdot \frac{1}{4}n}{n} = \frac{1}{2}$. We argue now that an additional 
round does not help with increasing the approximation factor. 

\begin{theorem}
 The best approximation ratio achievable in two rounds is $1/2$.
\end{theorem}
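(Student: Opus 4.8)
The plan is to combine Lemma~\ref{lem:domination} with an adaptive oracle strategy for the second round that never creates an $M$-augmenting path, where $M = E_1$ is the matching committed to in round~$1$. For the trivial direction, note that querying $V_1 = V$ forces a maximal matching $M_1$ of $G$, which has size at least $\tfrac12 OPT = n/4$, so the player already guarantees ratio $\tfrac12$ in a single round; it remains to show two rounds cannot do better. By Lemma~\ref{lem:domination} we may assume that after the first query the oracle's structure graph is isomorphic to $H_1$, so that $E_1 = M$ is a perfect matching between $A_{in}$ and $B_{in}$ (each of size $n/4$) and $F_1 = A_{out}\times B_{out}$.

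Write the second query as $V_2 = A_2\,\dot\cup\,B_2$ and set $A_2^{in} = A_2\cap A_{in}$, $A_2^{out} = A_2\cap A_{out}$, and similarly $B_2^{in}, B_2^{out}$. By Assumption~\ref{ass:2}, $V_2$ contains no edge of $E_1 = M$, so in particular no $M$-edge joins $A_2^{in}$ to $B_2^{in}$. The oracle defines $G[V_2]$ to be the disjoint union of the two complete bipartite graphs on $A_2^{in}\cup B_2^{out}$ and on $A_2^{out}\cup B_2^{in}$ — adding no edge inside $A_2^{in}\times B_2^{in}$, and none inside $A_2^{out}\times B_2^{out}$ (the latter forbidden by $F_1$) — which is consistent with $H_1$, and returns a maximal matching $M_2$ of this graph. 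Such a matching saturates the smaller side of each of the two complete bipartite pieces, so $M_2\subseteq (A_2^{in}\times B_2^{out})\cup(A_2^{out}\times B_2^{in})$, and the oracle retains the freedom to choose which vertices on the larger sides stay unmatched. The heart of the argument is that $M\cup M_2$ has no $M$-augmenting path, so by Berge's theorem $M$ is a maximum matching of $(A,B,M\cup M_2)$ and $OPT(E_2) = |M| = n/4$: the $M$-exposed vertices lie in $A_{out}\cup B_{out}$, from such a vertex the unique incident $M_2$-edge leads into $A_2^{in}$ or into $B_2^{in}$, the next (alternating) edge is the $M$-edge to the corresponding $M$-partner $v'$, and by Assumption~\ref{ass:2} $v'\notin V_2$ so $v'$ is $M_2$-exposed and the alternating path stops — hence every maximal alternating walk from an $M$-exposed vertex has length two and ends at an $M$-matched vertex, so it is not augmenting, while a length-one augmenting path would be an $M_2$-edge inside $A_{out}\times B_{out}\subseteq F_1$, which is excluded. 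Therefore the player's final matching has size at most $OPT(\tilde E_2)\le OPT(E_2)= n/4$, giving ratio at most $\tfrac{n/4}{n/2}=\tfrac12$.

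Finally I would check that the oracle's answer induces a legal structure graph $H_2$ dominating $\tilde H_2$ and stays streaming-consistent. Domination is immediate: $E_2 = M\cup M_2\supseteq \tilde E_2$ and $F_2 = F_1\cup C(V_2\setminus V(M_2))\supseteq \tilde F_2$. Streaming-consistency follows from the construction mechanism of Section~\ref{sec:prelim}, since the stream is $M_1$ followed by $M_2$ and $V_2$ carries no earlier edge, so running \textsc{Greedy} on the $G[V_2]$-substream reproduces $M_2$. The point requiring care — and where a short case analysis is needed — is that $H_2$ must still admit a perfect matching: I claim the oracle can choose the unmatched vertices so that $M^* = M_L^*\cup M_R^*$ still works, using that in each complete bipartite piece the unmatched vertices lie on a single side, whence $C(V_2\setminus V(M_2))$ is either empty, or contained in $A_{out}\times B_{out}\subseteq F_1$, or contained in $A_{in}\times B_{in}$, and in all three cases $M^*$ (which uses only $B_{out}$–$A_{in}$ and $B_{in}$–$A_{out}$ edges) is disjoint from it. The main obstacle is exactly this pair of verifications: the no-augmenting-path analysis and the preservation of a perfect matching in $H_2$; everything else is bookkeeping, and the whole argument hinges on the oracle restricting $M_2$ to the two ``crossed'' edge sets so that Assumption~\ref{ass:2} can be applied.
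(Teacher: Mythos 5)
Your proof is correct, but the heart of it differs from the paper's. Both arguments share the same frame (reduce to $H_1$ via Lemma~\ref{lem:domination}, invoke Assumptions~\ref{ass:1} and~\ref{ass:2}, then give an adversarial answer to the second query), but the second-round oracle and its analysis are genuinely different. The paper's oracle does the opposite of yours with the ``in''-vertices: in the case $|A_2^{in}|\ge|B_2^{in}|$ it first matches $A_2^{in}$ to $B_2^{in}$ (committing to new edges inside $A_{in}\times B_{in}$) and only then matches leftover $A_2^{in}$ vertices into $B_2^{out}$, so that no vertex of $A_{out}$ is ever matched; the bound then follows from a one-line counting argument (every edge of $M\cup M_2$ is incident to the $n/4$ vertices of $A_{in}$), with a symmetric case for $|A_2^{in}|<|B_2^{in}|$. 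Your oracle instead declares $A_2^{in}\times B_2^{in}$ and $A_2^{out}\times B_2^{out}$ to be non-edges and answers inside the two crossed bicliques $A_2^{in}\times B_2^{out}$ and $A_2^{out}\times B_2^{in}$; here $A_{out}$ vertices may well be matched by $M_2$, so the counting argument is unavailable and you correctly replace it by the Berge/no-augmenting-path argument, using Assumption~\ref{ass:2} to show every alternating path from an $M$-exposed vertex dies after the edge into $M(a)\notin V_2$. What each buys: the paper's strategy makes the size bound immediate and keeps the consistency question trivial at the cost of a case split; yours is case-free and makes explicit the two verifications the paper leaves implicit (that the unmatched queried vertices always lie on one side of each piece, so the newly committed non-edges sit inside $A_{in}\times B_{in}$ or $A_{out}\times B_{out}$ and the perfect matching $M_L^*\cup M_R^*$ survives, and that the construction of Section~\ref{sec:prelim} gives streaming consistency), at the cost of needing Berge's theorem. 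Both are valid proofs of the theorem.
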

\begin{proof}
 Let $A_2, B_2$ be the vertices of the second query. By Lemma~\ref{lem:domination}, 
 $H_1$ dominates $\tilde{H}_1$, and by Assumption~\ref{ass:1} we can assume that the player already knows $H_1$. 
 Let $A_2^{in} = A_2 \cap A_{in}$, $A_2^{out} = A_2 \cap A_{out}$ and define $B_2^{in}$ and $B_2^{out}$ similarly. 
 
 Suppose first
 that $|A_2^{in}| \ge |B_2^{in}|$. Then the oracle returns a matching $M_2$ that matches an arbitrary subset of $A_2^{in}$ of size
 $|B_2^{in}|$ to $B_2^{in}$, and matches $\max \{|B_2^{out}|, |A_2^{in}| - |B_2^{in}|\}$ of the remaining 
 $A_2^{in}$ vertices arbitrarily to vertices in $B_2^{out}$. In doing so, either all $A_2^{in}$ vertices 
 or all $B_2$ vertices are matched. Since $H_1$ indicates that there are no edges connecting the ``out''-vertices, 
 $M_2$ is therefore maximal.
 
 Observe further that $M \cup M_2$ does not match any vertex in $A_{out}$, and, hence, only half of the $A$-vertices
 are matched in $M \cup M_2$.
 The player thus cannot report any matching of size larger than $|M|$, which constitutes a $1/2$-approximation. 
 
 Last, the case $|A_2^{in}| < |B_2^{in}|$ is identical with roles of $A$ and $B$ vertices reversed.
\end{proof}

\subsection{Lower Bound for Three Rounds}
In this section, we work with a vertex set $V = A \ \dot{\cup} \ B$ with $|A| = |B| = 5$ (and thus $|V| = n = 10$). 
By choosing disjoint copies of this vertex set, our result can be extended to graphs with an arbitrarily large number of vertices.

\vspace{0.1cm} \noindent \textbf{First Query.}
Similar to the two round case, we define the structure graph 
$H_1 = (A_{in} \cup A_{out}, B_{in} \cup B_{out}, M, A_{out} \times B_{out})$, however, this time 
$|A_{in}| = |B_{in}| = 3$ and $|A_{out}| = |B_{out}| = 2$. The matching $M$ matches $A_{in}$ to $B_{in}$, see Figure~\ref{fig:str-graph}.

It shall be convenient to assign labels to the vertices in our structure graph. In
our arguments below, in order to avoid symmetric cases, we relabel the vertices of our structure graph as we see fit, 
however, we always ensure that the structure graph after relabeling is isomorphic to the structure graph before the relabeling. 

First, similar to Lemma~\ref{lem:domination}, it is not hard to see that a structure graph isomorphic to $H_1$ dominates $\tilde{H}_1$ (proof omitted).

\begin{lemma}
 There is a structure graph isomorphic to $H_1$ that dominates $\tilde{H}_1$.
\end{lemma}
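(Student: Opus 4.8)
The plan is to mirror the argument of Lemma~\ref{lem:domination}, adapting it to the new sizes $|A_{in}| = |B_{in}| = 3$ and $|A_{out}| = |B_{out}| = 2$. Let the first query be $(A_1, B_1)$ with $A_1 \subseteq A$, $B_1 \subseteq B$, and $|A| = |B| = 5$. As in the two-round case, the only freedom the oracle exploits is relabeling: it must choose which three $A$-vertices form $A_{in}$ (the rest forming $A_{out}$), and similarly on the $B$-side, in such a way that the resulting structure graph $H_1 = (A_{in} \cup A_{out}, B_{in} \cup B_{out}, M, A_{out} \times B_{out})$ both dominates $\tilde{H}_1$ and is itself a valid structure graph (i.e.\ still admits a perfect matching avoiding $F$).

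First I would handle the labeling. If $|A_1| \le 3$, pick $A_{in} \supseteq A_1$ of size exactly $3$ and set $A_{out} = A \setminus A_{in}$; if $|A_1| > 3$ (so $|A_1| \in \{4,5\}$ and $|A \setminus A_1| \le 1 \le 2$), pick $A_{out} \supseteq A \setminus A_1$ of size exactly $2$ and set $A_{in} = A \setminus A_{out}$. In either case $A_{out} \subseteq A_1$ fails only when... actually the point is the opposite: in the first case $A_{out} \cap A_1 = \emptyset$, and in the second case $A_{out} \subseteq A_1$ need not hold, but what we need is that $A_{out} \times B_{out}$ contains no edge the oracle is forced to reveal. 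Do the symmetric thing for $B_1$. Now the oracle answers the query with $M_1 = \{ e \in M : e \text{ has one endpoint in } A_1 \text{ and one in } B_1 \}$, i.e.\ the restriction of the perfect matching $M$ between $A_{in}$ and $B_{in}$ to the queried vertices.

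Next I would verify the three things needed. (i) $M_1$ is a maximal matching in $G[A_1 \cup B_1]$: the only possible additional edges would be inside $A_1 \cup B_1$, but $H_1$ declares all of $A_{out} \times B_{out}$ to be non-edges and the only edges of $H_1$ are those of $M$; any unmatched vertex of $A_1$ lies in $A_{out}$ and any unmatched vertex of $B_1$ lies in $B_{out}$, so no augmenting edge is available. Hence returning $M_1$ is consistent. (ii) $H_1$ dominates $\tilde H_1$: $\tilde E_1 = M_1 \subseteq M = E_1$, and $\tilde F_1 = C(A_1 \cup B_1 \setminus V(M_1))$, whose unmatched-$A$ part lies in $A_{out}$ and unmatched-$B$ part in $B_{out}$, so $\tilde F_1 \subseteq A_{out} \times B_{out} = F_1$. (iii) $H_1$ is a bona fide structure graph: it admits the perfect matching $M^* = M_L^* \cup M_R^*$ with $M_L^*$ a perfect matching between $B_{out}$ and a size-$2$ subset of $A_{in}$ and $M_R^*$ a perfect matching between $A_{out}$ and a size-$2$ subset of $B_{in}$, together with one edge of $M$ on the leftover $A_{in}$/$B_{in}$ pair; none of these edges lies in $F_1 = A_{out}\times B_{out}$.

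I do not expect a genuine obstacle here — this is the direct analogue of Lemma~\ref{lem:domination} and indeed the excerpt flags the proof as omitted. The only point requiring a little care is the split into the cases $|A_1| \le 3$ versus $|A_1| > 3$ (and likewise for $B_1$): one must check that in the ``$|A_1| > 3$'' case choosing $A_{out} \supseteq A \setminus A_1$ is possible, which holds because $|A \setminus A_1| \le 1 < 2$, and that the resulting $A_{out}$, even though it may intersect $A_1$, still yields a maximal $M_1$ — which it does, since maximality only fails if some unmatched $A_1$-vertex is adjacent to some unmatched $B_1$-vertex, and all such pairs sit in $A_{out}\times B_{out}$, a declared non-edge set. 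Everything else is bookkeeping identical to the two-round argument, just with the constants $3$ and $2$ in place of $n/4$.
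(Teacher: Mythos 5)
Your overall plan (mirror the two-round argument of Lemma~\ref{lem:domination} with the constants $3$ and $2$) is the intended one, but as written there is a genuine gap in steps (i) and (ii): the claim that ``any unmatched vertex of $A_1$ lies in $A_{out}$ and any unmatched vertex of $B_1$ lies in $B_{out}$'' is false under the rule you set up. You explicitly restrict the oracle's freedom to choosing the in/out partition and then take $M$ to be a fixed perfect matching between $A_{in}$ and $B_{in}$, returning $M_1=\{e\in M: e\subseteq A_1\cup B_1\}$. But a queried in-vertex whose $M$-partner is \emph{not} queried is left unmatched, and it lies in $A_{in}$ (resp.\ $B_{in}$), not in the out-sets. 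Concretely, take $A_1=\{a_1\}$, $B_1=\{b_2\}$ with $M=\{a_1b_1,a_2b_2,a_3b_3\}$: your rule forces $a_1\in A_{in}$, $b_2\in B_{in}$, the answer is $M_1=\emptyset$, and maximality of $M_1$ in $G[V_1]$ commits the oracle to the non-edge $a_1b_2$, so $\tilde F_1=\{a_1b_2\}\not\subseteq A_{out}\times B_{out}=F_1$ and domination fails. (Relatedly, your maximality argument conflates ``edges of $H_1$'' with ``edges of $G$'': $E_1=M$ does not forbid other edges of $G$, so maximality is exactly the commitment that unmatched queried pairs are non-edges, and those commitments are what must land inside $F_1$.)

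The missing idea is that the isomorphism freedom also covers the choice of $M$ (equivalently, relabeling \emph{within} $A_{in}$ and $B_{in}$): after fixing the in/out split as you do, the oracle should choose $M$ so that vertices of $A_1\cap A_{in}$ are matched to vertices of $B_1\cap B_{in}$ as far as possible, i.e.\ $\min\{|A_1\cap A_{in}|,|B_1\cap B_{in}|\}$ edges of $M$ join queried in-vertices. Then a short case check closes the argument: if, say, $|A_1\cap A_{in}|>|B_1\cap B_{in}|$, the leftover unmatched queried $A_{in}$-vertices are harmless only if no queried $B$-vertex is unmatched; this holds because $B_1\cap B_{out}\neq\emptyset$ is possible under your labeling rule only when $|B_1|>3$, in which case $B_{in}\subseteq B_1$ and $|B_1\cap B_{in}|=3\ge|A_1\cap A_{in}|$, a contradiction (and symmetrically with the roles of $A$ and $B$ swapped). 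With that alignment added, every pair of unmatched queried vertices on opposite sides does lie in $A_{out}\times B_{out}$, and your points (i)--(iii) go through; the same alignment is implicitly needed already in the paper's Lemma~\ref{lem:domination}, so you should state it rather than rely on a fixed $M$.
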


\vspace{0.1cm} \noindent \textbf{Second Query.}
We assume that the player knows $H_1$ after the first query (Assumption~1).
Next, we define structure graph $H_2 = (A_{in} \cup A_{out}, B_{in} \cup B_{out}, M \cup E_2, A_{out} \times B_{out} \cup F_2) $, where
$E_2 = \{ a_1b_5, a_2b_3 \}$, and $F_2 = \{ a_2b_4, a_3b_4 \}$. It is easy to see
that $H_2$ is indeed a structure graph (see Figures~\ref{fig:h2-1} and \ref{fig:h2-2}). 

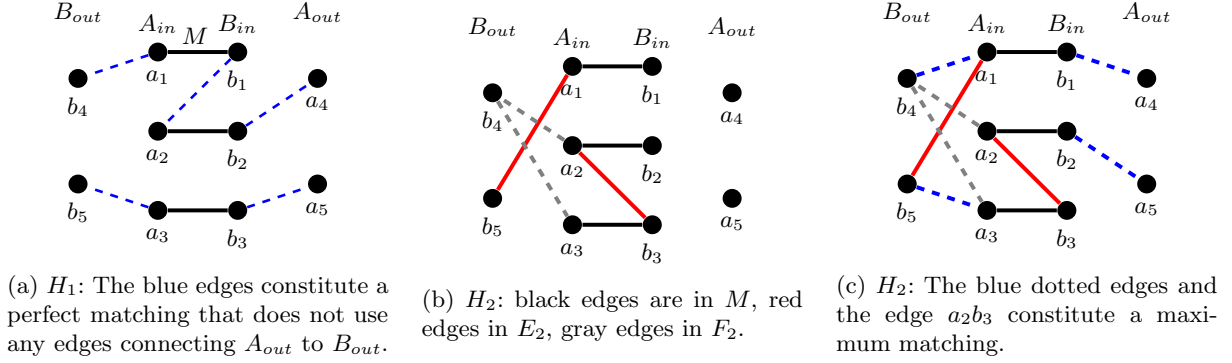
\begin{figure}[t]
\begin{subfigure}{0.315\textwidth}
\centering
\begin{tikzpicture}[scale=0.7]
\tikzstyle{vertex} = [circle, fill=black, inner sep=0pt, minimum size=0.25cm]
\tikzstyle{black line} = [line width=1.5pt, color=black]
\tikzstyle{blue dash} = [color=blue, line width=1.0pt, dashed]

\node [vertex][label={[label distance=0.5cm]above:$B_{out}$}, label=below:$b_4$] (v8) at (-3.5,3) {};
\node [vertex][label=above:$A_{in}$, label=below:$a_1$] (v1) at (-2,3.5) {};
\node [vertex][label=above:$B_{in}$, label=below:$b_1$] (v2) at (-0.5,3.5) {};
\node [vertex][label=below:$a_2$] (v3) at (-2,2) {};
\node [vertex][label=below:$b_2$] (v4) at (-0.5,2) {};
\node [vertex][label={[label distance=0.5cm]above:$A_{out}$}, label=below:$a_4$] (v9) at (1,3) {};
\node [vertex][label=below:$b_5$] (v7) at (-3.5,1) {};
\node [vertex][label=below:$a_3$] (v5) at (-2,0.5) {};
\node [vertex][label=below:$b_3$] (v6) at (-0.5,0.5) {};
\node [vertex][label=below:$a_5$] (v10) at (1,1) {};
\draw [black line] (v1) edge (v2);
\draw [black line] (v3) edge (v4);
\draw [black line] (v5) edge (v6);
\draw [blue dash] (v1) edge (v8);
\draw [blue dash] (v5) edge (v7);
\draw [blue dash] (v4) edge (v9);
\draw [blue dash] (v6) edge (v10);
\draw [blue dash] (v2) edge (v3);
\node at (-1.3,3.8) {$M$};
\end{tikzpicture}
\caption{$H_1$: The blue edges constitute a perfect matching that does not 
use any edges connecting $A_{out}$ to $B_{out}$. 
\label{fig:str-graph}}
\end{subfigure} \hfill
\begin{subfigure}{0.315\textwidth}
\centering

\begin{tikzpicture}[scale=0.7]
\tikzstyle{vertex} = [circle, fill=black, inner sep=0pt, minimum size=0.25cm]
\tikzstyle{green vertex} = [circle, fill=green, inner sep=0pt, minimum size=0.25cm]
\tikzstyle{blue dash} = [color=blue, line width=1.5pt, dashed]
\tikzstyle{gray dash} = [color=gray, line width=1.5pt, dashed]
\tikzstyle{red line} = [line width=1.5pt, color=red]
\tikzstyle{orange line} = [line width=1.5pt, color=orange]
\tikzstyle{black line} = [line width=1.5pt, color=black]

\node [vertex][label={[label distance=0.5cm]above:$B_{out}$}, label=below:$b_4$] (v8) at (-3.5,3) {};
\node [vertex][label=above:$A_{in}$, label=below:$a_1$] (v1) at (-2,3.5) {};
\node [vertex][label=above:$B_{in}$, label=below:$b_1$] (v2) at (-0.5,3.5) {};
\node [vertex][label=below:$a_2$] (v3) at (-2,2) {};
\node [vertex][label=below:$b_2$] (v4) at (-0.5,2) {};
\node [vertex][label={[label distance=0.5cm]above:$A_{out}$}, label=below:$a_4$] (v9) at (1,3) {};
\node [vertex][label=below:$b_5$] (v7) at (-3.5,1) {};
\node [vertex][label=below:$a_3$] (v5) at (-2,0.5) {};
\node [vertex][label=below:$b_3$] (v6) at (-0.5,0.5) {};
\node [vertex][label=below:$a_5$] (v10) at (1,1) {};
\draw [black line] (v1) edge (v2);
\draw [black line] (v3) edge (v4);
\draw [black line] (v5) edge (v6);
\draw [red line] (v1) edge (v7);
\draw [red line] (v3) edge (v6);
\draw [gray dash] (v8) edge (v5);
\draw [gray dash] (v3) edge (v8);
\end{tikzpicture}
\caption{$H_2$: black edges are in $M$, red edges in $E_2$, gray edges in $F_2$. \\  $ $ \label{fig:h2-1}}
\end{subfigure} \hfill
\begin{subfigure}{0.315\textwidth}
\centering

\begin{tikzpicture}[scale=0.7]
\tikzstyle{vertex} = [circle, fill=black, inner sep=0pt, minimum size=0.25cm]
\tikzstyle{green vertex} = [circle, fill=green, inner sep=0pt, minimum size=0.25cm]
\tikzstyle{blue dash} = [color=blue, line width=1.5pt, dashed]
\tikzstyle{gray dash} = [color=gray, line width=1.5pt, dashed]
\tikzstyle{red line} = [line width=1.5pt, color=red]
\tikzstyle{orange line} = [line width=1.5pt, color=orange]
\tikzstyle{black line} = [line width=1.5pt, color=black]

\node [vertex][label={[label distance=0.5cm]above:$B_{out}$}, label=below:$b_4$] (v8) at (-3.5,3) {};
\node [vertex][label=above:$A_{in}$, label=below:$a_1$] (v1) at (-2,3.5) {};
\node [vertex][label=above:$B_{in}$, label=below:$b_1$] (v2) at (-0.5,3.5) {};
\node [vertex][label=below:$a_2$] (v3) at (-2,2) {};
\node [vertex][label=below:$b_2$] (v4) at (-0.5,2) {};
\node [vertex][label={[label distance=0.5cm]above:$A_{out}$}, label=below:$a_4$] (v9) at (1,3) {};
\node [vertex][label=below:$b_5$] (v7) at (-3.5,1) {};
\node [vertex][label=below:$a_3$] (v5) at (-2,0.5) {};
\node [vertex][label=below:$b_3$] (v6) at (-0.5,0.5) {};
\node [vertex][label=below:$a_5$] (v10) at (1,1) {};
\draw [black line] (v1) edge (v2);
\draw [black line] (v3) edge (v4);
\draw [black line] (v5) edge (v6);
\draw [red line] (v1) edge (v7);
\draw [red line] (v3) edge (v6);
\draw [gray dash] (v8) edge (v5);
\draw [gray dash] (v3) edge (v8);
\draw [blue dash] (v8) edge (v1);
\draw [blue dash] (v2) edge (v9);
\draw [blue dash] (v7) edge (v5);
\draw [blue dash] (v4) edge (v10);
\end{tikzpicture}
\caption{$H_2$: The blue dotted edges and the edge $a_2b_3$ constitute a maximum matching. \\ $ $  \label{fig:h2-2}}
\end{subfigure}
\caption{Illustrations of structure graphs $H_1$ and $H_2$.  \label{fig:h2}} 
\end{figure}

We shall prove that there is a structure graph isomorphic to $H_2$ that dominates $\tilde{H}_2$. 
Lemma~\ref{lem: 2rnd 3in} considers the case when the second query $V_2$ contains exactly three ``in''-vertices, i.e., 
vertices from $A_{in} \cup B_{in}$, and Lemma~\ref{lem: 2rnd <3} considers the case when there are fewer ``in''-vertices.
By Assumption~\ref{ass:2}, we do not need to consider the cases when more than three ``in''-vertices are
contained in $V_2$ since then $V_2$ necessarily contains a pair of vertices $u,v$ such that $uv \in M$.

\begin{lemma}
\label{lem: 2rnd 3in}
If the player queries exactly 3 ``in''-vertices (i.e., vertices from $A_{in} \cup B_{in}$) in their second query then there exists 
a structure graph isomorphic to $H_2$ that dominates $\tilde{H}_2$.
\end{lemma}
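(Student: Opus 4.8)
The plan is to produce, for every admissible second query $V_2$, an oracle answer $M_2$ together with a relabelling of the vertices under which $\tilde H_2$ is dominated by $H_2$. First I would normalise the query: by Assumption~\ref{ass:2} the three queried in-vertices span no edge of $M$, and since $H_1$ is unchanged under swapping the roles of $A$ and $B$ (its matching $M$ merely pairs $A_{in}$ with $B_{in}$), I may assume that at least two of them lie in $A_{in}$, the remaining splits being symmetric. Together with label permutations inside each of $A_{in},B_{in},A_{out},B_{out}$, this leaves exactly two situations: the query contains all of $A_{in}=\{a_1,a_2,a_3\}$, or it contains $\{a_1,a_2\}\subseteq A_{in}$ and the single in-vertex $b_3\in B_{in}$; in either case together with an arbitrary subset $S$ of the four out-vertices. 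The structure of $H_2$ is tailored to exactly these possibilities: relative to $H_1$ it adds one ``in--in'' edge $a_2b_3$, one ``in--out'' edge $a_1b_5$, and the non-edges $a_2b_4,a_3b_4$, which together with $A_{out}\times B_{out}$ make $b_4$ non-adjacent to every $A$-vertex except $a_1$, while every $B_{in}$-vertex stays adjacent to all of $A$.

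With this in hand, the oracle answers with a matching $M_2$ of at most two edges (automatically disjoint from $M$, since $G[V_2]$ contains no edge of $M$): the ``in--in'' slot is spent to cover $b_3$ whenever $b_3$ is queried, the ``in--out'' slot is spent to pair a still-unmatched $A_{in}$-vertex with a queried $B_{out}$-vertex, and all other vertices are left unmatched. For each of the at most nine sub-cases determined by $|S\cap A_{out}|$ and $|S\cap B_{out}|$ I would (i) write $M_2$ down explicitly --- it is empty or a single edge when no $B_{out}$-vertex is queried, and otherwise matches $b_3$ (if present) plus one $A_{in}$-vertex into $S\cap B_{out}$; (ii) read off the relabelling $\psi$, a permutation of the index set $\{1,2,3\}$ and one of $\{b_4,b_5\}$, that identifies the new edges of $M_2$ with a subset of $\{a_1b_5,a_2b_3\}$ and, crucially, sends $a_1$ to an $A_{in}$-vertex that is matched by $M_2$ or absent from $V_2$; (iii) check that the unmatched set $U=V_2\setminus V(M_2)$ never contains a $B_{in}$-vertex alongside an $A$-vertex, so that $C(U)$ only pairs $A$-vertices with queried $B_{out}$-vertices, and that each such pair is a non-edge of $\psi(H_2)$ --- this is where we use that in $\psi(H_2)$ the image of $b_4$ is forbidden to all of $A$ except $\psi(a_1)$ (chosen above to lie outside $U$) and the image of $b_5$ is forbidden to $A_{out}$; and (iv) conclude that $M_2$ is maximal in $G[V_2]$ by letting the oracle declare all pairs of $C(U)$ (as well as $A_{out}\times B_{out}$) to be non-edges, which is consistent because $G$ may be taken to be the edges of $\psi(H_2)$ together with its perfect matching, and which makes $U$ independent; streaming-consistency is then inherited from the construction in Section~\ref{sec:prelim}, since Greedy run on the relevant substream returns precisely $M_2$.

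The delicate point, and the one I would handle last, is the family of sub-cases in which the player queries both $b_4$ and $b_5$ together with one or two out-vertices of the opposite side: there the oracle has only two new edges to spend, and must at once avoid ever matching an $A_{out}$-vertex (as $H_2$ has no edge at $A_{out}$) or using the ``bad'' vertex $b_4$ as the $B_{out}$-endpoint of its ``in--out'' edge, evict the lone $B_{in}$-vertex from $U$, and ensure that whichever of $b_4,b_5$ remains unmatched is the one playing the role of $b_4$ and that every leftover $A$-vertex falls inside its forbidden neighbourhood. Pinning down a single $M_2$ meeting all of these requirements simultaneously and a relabelling $\psi$ compatible with it is the crux of the argument; every other sub-case is a relaxation of this one --- fewer queried out-vertices loosen the constraints, and the all-$A_{in}$ query drops the $B_{in}$-eviction requirement entirely --- and the lemma follows once all sub-cases are dispatched.
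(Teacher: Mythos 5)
Your proposal is correct and follows essentially the same route as the paper: reduce by the $A$--$B$ symmetry and relabelling to the two query types (all of $A_{in}$, or $\{a_1,a_2,b_3\}$), have the oracle answer with a subset of $E_2$ (the edge $a_2b_3$ to cover the queried $B_{in}$-vertex and $a_1b_5$ to cover one queried $B_{out}$-vertex), and use $F_2$ together with $A_{out}\times B_{out}$ to certify maximality and domination. The ``crux'' case you defer is exactly resolved by the rule you already state (label the matched $B_{out}$-vertex $b_5$ and the unmatched one $b_4$; in that case both queried $A_{in}$-vertices are matched, so $b_4$'s forbidden neighbourhood covers all leftover $A$-vertices), which is precisely the paper's six-case analysis in condensed form.
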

\begin{proof}
The player can either query more vertices in $A_{in}$ or in $B_{in}$, and these cases are symmetrical. Hence we 
only consider the case when the player queries more vertices in $A_{in}$. Due to Assumption~\ref{ass:2}, for 
queries that contain vertices in both $A_{in}$ and $B_{in}$, we assume these vertices do not form any edges seen in $M$.


Since we will not match any vertices in $A_{out}$, we do not need to distinguish between 
cases where the player queries different numbers of vertices in $A_{out}$. We distinguish between the following cases:

\begin{enumerate}
	\item Player queries all vertices in $A_{in}$ and the query includes $b_5$: the oracle returns  $M_2 = \{ a_1b_5 \}$.
	\item Player queries all vertices in $A_{in}$ and only $b_4$ in $B_{out}$: relabel $b_4$ as $b_5$ and proceed as in case (1). 
	\item Player queries all vertices in $A_{in}$ and no vertices in $B_{out}$: the oracle returns $M_2 = \emptyset$.
	\item Player queries two vertices in $A_{in}$, one vertex in $B_{in}$ and the query includes $b_5$: relabel the ``in'' 
	vertices so that after relabeling the vertices $a_1$, $a_2$ and $b_3$ are included in the query. The oracle returns $M_2 = E_2$.
	\item Player queries two vertices in $A_{in}$, one vertex in $B_{in}$ and only $b_4$ in $B_{out}$: relabel $b_4$ as $b_5$ and proceed as in case (4).
	\item Player queries two vertices in $A_{in}$, one vertex in $B_{in}$ and no vertices in $B_{out}$:  relabel ``in'' vertices so that after relabeling the vertices $a_2$ and $b_3$ are included in the query. The oracle returns $M_2 = \{ a_2b_3 \}$.
\end{enumerate}
In all cases considered, observe that $M_2 \subseteq E_2$. Further, edges $F_2$ ensure that $M_2$ is maximal.
\end{proof}

We argue now that querying three ``in''-vertices in the second round is best possible in the sense
that querying fewer (or more) ``in''-vertices does not yield more information.

\begin{lemma}
\label{lem: 2rnd <3}
If the player queries fewer than 3 ``in''-vertices (i.e., vertices from $A_{in} \cup B_{in}$) then there exists 
a structure graph isomorphic to $H_2$ that dominates $\tilde{H}_2$.
\end{lemma}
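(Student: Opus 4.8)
Here is how I would prove Lemma~\ref{lem: 2rnd <3}.

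I would follow the template of the proof of Lemma~\ref{lem: 2rnd 3in}. After the first round the oracle has committed to $H_1$, and by Assumption~\ref{ass:1} we may assume the player knows it. I then enumerate the second query $V_2$ according to how many ``in''-vertices (vertices of $A_{in}\cup B_{in}$) it contains and on which side these lie, and for each pattern I specify the oracle's answer $M_2\subseteq E_2$ together with a structure graph isomorphic to $H_2$ that extends $H_1$ (i.e.\ whose edge and forbidden sets contain those of $H_1$) and dominates $\tilde H_2$. Picking such a copy of $H_2$ amounts to choosing which $A_{in}$-vertex plays the role of $a_1$, which plays $a_2$, and which $B_{out}$-vertex plays $b_5$ (everything else is then forced); together with the $A\leftrightarrow B$ symmetry of the whole construction, this is the only freedom I will use. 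By that symmetry I may assume $V_2$ contains at least as many $A_{in}$- as $B_{in}$-vertices, so writing $\alpha=|V_2\cap A_{in}|$, $\beta=|V_2\cap B_{in}|$ and recalling that here $\alpha+\beta\le 2$, it suffices to treat $(\alpha,\beta)\in\{(0,0),(1,0),(2,0),(1,1)\}$.

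The four cases go as follows. If $(\alpha,\beta)=(0,0)$, then $V_2\subseteq A_{out}\cup B_{out}$; the oracle answers $M_2=\emptyset$, which is maximal because $A_{out}\times B_{out}$ is already forbidden, and the newly created non-edges all lie in $A_{out}\times B_{out}$. If $(\alpha,\beta)\in\{(1,0),(2,0)\}$, let $a$ (and, in the second case, $a'$) be the queried $A_{in}$-vertices. If some $B_{out}$-vertex is queried, let $a$, $a'$, and that vertex play the roles of $a_1$, $a_2$, $b_5$, and answer $M_2=\{a_1b_5\}$; then $b_5$ is matched, so the only non-edge $a'$ can form with $B_{out}$ is $a'b_4=a_2b_4\in F_2$, while every other unmatched queried vertex contributes only non-edges in $A_{out}\times B_{out}$. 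If no $B_{out}$-vertex is queried, then $V_2\subseteq A$, the induced subgraph has no edges, and $M_2=\emptyset$ works with no new non-edges. Finally, if $(\alpha,\beta)=(1,1)$, let $a\in A_{in}$ and $b\in B_{in}$ be the queried ``in''-vertices; by Assumption~\ref{ass:2} they are not matched in $M$, so the edge $ab$ exists. Let $a$ and the $M$-partner of $b$ play the roles of $a_2$ and $a_3$ (so that $b$ plays the role of $b_3$), and answer $M_2=\{a_2b_3\}$; the remaining queried vertices lie in $A_{out}\cup B_{out}$, stay unmatched, and contribute only non-edges inside $A_{out}\times B_{out}$.

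For each case I would then check the three conditions: $M_2\subseteq E_2$ in the chosen copy of $H_2$, so $\tilde E_2\subseteq M\cup E_2$; the unmatched queried vertices are pairwise non-adjacent in that copy, which both makes $M_2$ maximal and gives $\tilde F_2\subseteq A_{out}\times B_{out}\cup F_2$; and the chosen copy extends $H_1$, so it still dominates $\tilde H_1$. The part that needs the most care — and the reason $H_2$ was defined with this particular $E_2$ and $F_2$ — is the handful of patterns where the oracle cannot simply answer $M_2=\emptyset$: in the $(1,1)$ case the empty answer would create a non-edge between an $A_{in}$- and a $B_{in}$-vertex, which no copy of $H_2$ forbids, forcing out the in-in edge $a_2b_3$; and in the $(2,0)$ case one $A_{in}$-vertex is unavoidably left unmatched, so the bound must lean on $F_2$ containing \emph{both} $a_2b_4$ and $a_3b_4$ together with $b_5$ being matched. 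Verifying that these are the only obstructions, and that no relabeling conflicts with Assumption~\ref{ass:2} or with the domination of $\tilde H_1$, completes the proof.
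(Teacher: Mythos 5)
Your proof is correct and follows essentially the same route as the paper's: a case analysis on the number and location of queried ``in''-vertices, relabeling to a copy of $H_2$ so that the returned matching is a subset of $E_2$, with maximality guaranteed by $F_2$ together with the forbidden biclique $A_{out}\times B_{out}$. You are somewhat more explicit than the paper about invoking the $A$--$B$ symmetry and verifying domination and maximality vertex-by-vertex, but the argument is the same.
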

\begin{proof}
Clearly if the player does not query any ``in''-vertices, no matching will be found i.e. $M_2 = \emptyset$. 
If the player queries exactly one vertex in $A_{in}$, we can relabel this vertex as $a_1$ and if the query contains a
vertex in $B_{out}$, relabel this one to be $b_5$. Then the matching found will be a subset of $E_2$. 
If the player queries exactly two ``in'' vertices there are two cases to consider. If they 
are both in $A_{in}$, we ensure one of these vertices is $a_1$ by relabeling, and, if at least one vertex in 
$B_{out}$ is queried, potentially relabel this vertex to be $b_5$ and return the edge $a_1b_5$. If the player 
queried one vertex in $A_{in}$ and one in $B_{in}$, we relabel these vertices as $a_2, b_3$ and return the edge 
between them, $a_2b_3$. Hence the edges learned by the player are always a subset of $E_2$.
In all cases considered, edges $F_2$ ensure that matching $M_2$ is maximal.
\end{proof}



\vspace{0.1cm} \noindent \textbf{Third Query.}
We assume that the player knows structure graph $H_2$. Similar to the second query, we distinguish between the cases
where the player queries exactly three ``in''-vertices and fewer ``in''-vertices. Again, by Assumption~\ref{ass:2},
we do not need to consider the case where the player queries more than three ``in''-vertices.
In the following proofs, we will define different structure graphs $H_3$ that depend on the individual query.

\begin{lemma} \label{lem:exactly-3-in-vertices}
If the player queries exactly 3 ``in''-vertices in the third round, then the player cannot output a matching of size larger than $3$.
\end{lemma}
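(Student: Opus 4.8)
Recall the observation made just before this subsection: the approximation ratio is completely determined by the oracle's structure graph $H_r=(A,B,E_r,F_r)$ after the last round, and the player's largest matching has size at most $OPT(E_r)$, while $H_r$ must dominate $\tilde H_r$ and, being a structure graph, must admit a perfect matching. The plan is therefore to show that for every admissible third query $V_3=A_3\cup B_3$ containing exactly three ``in''-vertices (and, by Assumption~\ref{ass:2}, no pair $u,v$ with $uv\in M\cup E_2$) the oracle can return a maximal matching $M_3$ in $G[V_3]$ and commit to a structure graph $H_3$ with $M\cup E_2\subseteq E_3$, $A_{out}\times B_{out}\cup F_2\subseteq F_3$, and $OPT(E_3)\le 3$. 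Since the edge set $M\cup E_2=\{a_1b_1,a_2b_2,a_3b_3,a_1b_5,a_2b_3\}$ of $H_2$ already satisfies $OPT=3$, with $\{a_1,a_2,a_3\}$ as a vertex cover, the entire content of the lemma is that the oracle can answer the third query without enlarging this optimum.

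I would organise the argument by the split $(|A_3\cap A_{in}|,|B_3\cap B_{in}|)\in\{(3,0),(2,1),(1,2),(0,3)\}$, using relabelings that are automorphisms of $H_2$ (notably $a_4\leftrightarrow a_5$) together with Assumption~\ref{ass:2} to reduce each case to a handful of canonical queries. In the first three cases at least one $A_{in}$-vertex is queried, and the oracle lets $M_3$ match \emph{only} $A_{in}$-vertices: it declares an edge $a_ib$ whenever this matches an unmatched $a_i\in A_{in}\cap V_3$, and declares every remaining pair of unmatched queried vertices a non-edge. The only such pairs that are not already non-edges (recall $A_{out}\times B_{out}$ from round~1 and $F_2=\{a_2b_4,a_3b_4\}$ from round~2) have the form $a_ib_j$ with $a_i\in A_{out}\cap V_3$ and $b_j\in B_{in}\cap V_3$ still unmatched; but the single $B_{in}$-vertex of case $(2,1)$ and one $B_{in}$-vertex of case $(1,2)$ can always be matched into $A_{in}$ by the oracle, so at most one $B_{in}$-index is ever blocked, and each of $a_4,a_5$ keeps at least two free $B_{in}$-neighbours. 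Since every revealed edge lies in $\{a_1,a_2,a_3\}\times B$, the set $\{a_1,a_2,a_3\}$ stays a vertex cover of $E_3$, hence $OPT(E_3)=3$. One forced move occurs here: if $b_4\in V_3$ then $b_4$ can only be adjacent to $a_1$ (its other potential neighbours are forbidden by $F_2$ or by $A_{out}\times B_{out}$), so the oracle must reveal $a_1b_4$; as $a_1\in A_{in}$ this is harmless.

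The main obstacle is case $(0,3)$. Here no $A_{in}$-vertex is queried, so $B_3\cap B_{in}=\{b_1,b_2,b_3\}$, and the oracle is now \emph{forced} to match each queried $A_{out}$-vertex, because leaving $a_4$ (say) unmatched would make $a_4$ isolated from all of $B_{in}$, after which no structure graph dominating $\tilde H_3$ could admit a perfect matching. The oracle matches $a_4$ (if queried) to $b_3$ and $a_5$ (if queried) to $b_2$, so $E_3=(M\cup E_2)\cup\{a_4b_3\}$ or $(M\cup E_2)\cup\{a_4b_3,a_5b_2\}$; one checks that $\{a_1,a_2,b_3\}$, respectively $\{a_1,b_2,b_3\}$, is then a vertex cover and $OPT(E_3)=3$, while the remaining queried vertices lie in $\{b_1,b_4,b_5\}$ and are trivially independent, so no extra non-edges are needed. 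The delicate point to verify is precisely that the forced $A_{out}$-edges must be routed into the ``saturated'' vertices $b_3$ (demanded by both $a_2$ and $a_3$ in $M\cup E_2$) and $b_2$: routing $a_4$ to $b_1$ instead would create the augmenting matching $\{a_1b_5,a_2b_2,a_3b_3,a_4b_1\}$ and push $OPT$ to $4$.

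Finally, for each case I would exhibit a witness perfect matching $M^*$ of $H_3$, to confirm that $H_3$ is indeed a structure graph. In any perfect matching $b_4$ is matched to $a_1$ (forced as above), $b_5$ to $a_2$ or $a_3$, and then the third $A_{in}$-vertex together with $a_4,a_5$ occupy the three $B_{in}$-vertices; since the structure-graph definition only requires $M^*\cap F_3=\emptyset$ (so $M^*$ may use fresh edges outside $E_3$), and the only newly forbidden pairs are either incident to a matched vertex or block a single $B_{in}$-index for $a_4,a_5$, a consistent assignment always exists. Streaming-consistency of the constructed oracle then follows from the construction mechanism of Section~\ref{sec:prelim}. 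The routine part is the enumeration of canonical queries; the only step where anything can go wrong is case $(0,3)$, as described above.
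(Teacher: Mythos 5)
Your proposal is correct and follows essentially the same route as the paper: an adaptive oracle argument split by how many of the three queried ``in''-vertices lie in $A_{in}$ versus $B_{in}$ (using relabelings/automorphisms and Assumption~\ref{ass:2} to canonicalise), revealing only edges incident to $A_{in}$ when possible and, in the all-$B_{in}$ case, routing the forced $A_{out}$ edges to $b_2,b_3$ exactly as the paper's Case~1 does (with $a_4,a_5$ swapped), then checking maximality, domination, and the existence of a perfect matching avoiding $F_3$. Your vertex-cover certificates for $OPT(E_3)\le 3$ are a slightly cleaner bookkeeping device than the paper's inspection of Figure~\ref{fig: h3}, but the underlying construction is the same.
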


\begin{proof}
We provide the oracle's answers when the player queries exactly three ``in''-vertices. 
Among those cases, there are three cases to consider where the player queries more vertices in $B_{in}$ than in $A_{in}$: 
\begin{enumerate}
 \item \textbf{Case 1:} Player queries $b_1, b_2, b_3$.
The oracle defines $H_3 = (A, B, E_3, F_3)$ such that $E_3 = M \cup E_2 \cup \{ a_4b_2, a_5b_3 \}$ and $F_3 = A_{out} \times B_{out} \cup F_2$.
If the player queried both vertices in $A_{out}$, the oracle returns $M_3 = \{ a_4b_2, a_5b_3 \}$. Otherwise $M_3$ would consist of one or zero edges depending on the player's query. In particular, we have $M_3 \subset E_3$.

\vspace{0.01cm} 

In cases 2 and 3, we do not define any edges involving vertices from $A_{out}$ or $B_{out}$, so the 
oracle proceeds regardless of which vertices in $A_{out}, B_{out}$ the player queried.

 \item \textbf{Case 2:} Player queries $a_1, b_2, b_3$. The oracle defines $H_3 = (A, B, E_3, F_3)$ such that $E_3 = M \cup E_2 \cup \{ a_1b_2 \}$
 and $F_3 = A_{out} \times B_{out} \cup F_2 \cup \{ a_4b_3, a_5b_3 \}$. The oracle returns $M_3 = \{ a_1b_2 \}$.

  \item \textbf{Case 3:} Player queries $b_1, b_2, a_3$. The oracle defines $H_3 = (A, B, E_3, F_3)$ such that $E_3 = M \cup E_2 \cup \{ a_3b_2 \}$
and $F_3 = A_{out} \times B_{out} \cup F_2 \cup \{ a_4b_1, a_5b_1 \}$. The oracle returns $M_3 = \{ a_3b_2 \}$.
\end{enumerate}


Observe that the case $b_1, a_2, b_3 \in V_3$ is not relevant, since $a_2 b_3 \in M_2$ and Assumption~\ref{ass:2}. 
Figure~\ref{fig: h3} shows that in these three cases, $H_3$ is a structure graph and the largest matching that the player thus 
able to return is of size $3$.

If the player queries more vertices in $A_{in}$ than in $B_{in}$, we will argue that the player will not learn any 
edges connecting to vertices in $A_{out}$, and since the player then only holds edges incident to $3$ of the $5$ $A$-vertices,
the player cannot report a matching larger than of size $3$. 

If the player queries all three vertices in $A_{in}$ then he clearly cannot learn any edges connecting to $A_{out}$. If the 
player queries a vertex in $B_{in}$, note that we can match it with a vertex queried in $A_{in}$, and there will be no vertices 
left to match with vertices in $A_{out}$ (see Figure~\ref{fig: 3rnd more A_in}). Since no more non-edges are defined, 
it is easy to see that edges can be added to create a perfect matching.
\end{proof}

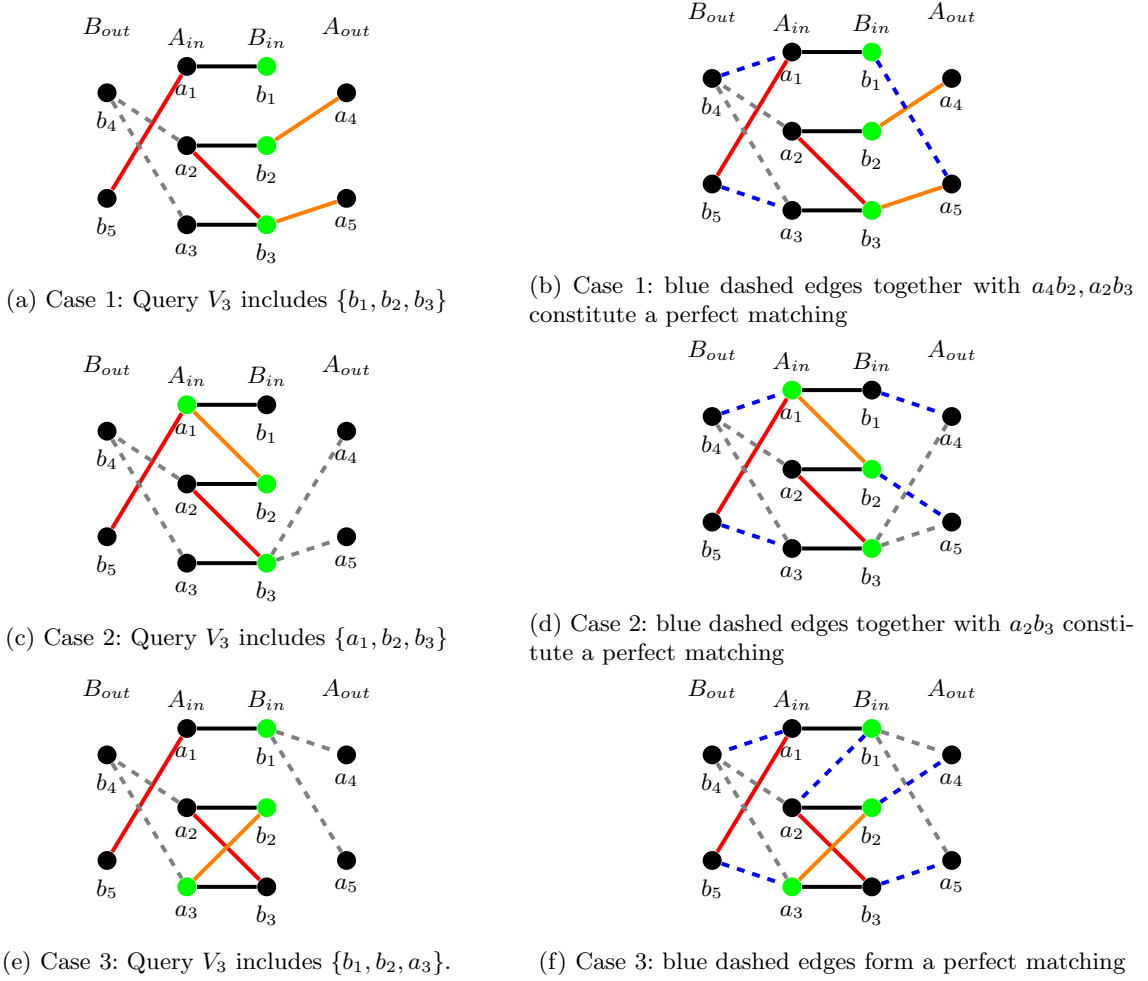
\begin{figure}[t]
\begin{subfigure}{0.5\textwidth}
\centering
\begin{tikzpicture}[scale=0.7]
\tikzstyle{vertex} = [circle, fill=black, inner sep=0pt, minimum size=0.25cm]
\tikzstyle{green vertex} = [circle, fill=green, inner sep=0pt, minimum size=0.25cm]
\tikzstyle{blue dash} = [color=blue, line width=1.5pt, dashed]
\tikzstyle{gray dash} = [color=gray, line width=1.5pt, dashed]
\tikzstyle{red line} = [line width=1.5pt, color=red]
\tikzstyle{orange line} = [line width=1.5pt, color=orange]
\tikzstyle{black line} = [line width=1.5pt, color=black]

\node [vertex][label={[label distance=0.5cm]above:$B_{out}$}, label=below:$b_4$] (v8) at (-3.5,3) {};
\node [vertex][label=above:$A_{in}$, label=below:$a_1$] (v1) at (-2,3.5) {};
\node [green vertex][label=above:$B_{in}$, label=below:$b_1$] (v2) at (-0.5,3.5) {};
\node [vertex][label=below:$a_2$] (v3) at (-2,2) {};
\node [green vertex][label=below:$b_2$] (v4) at (-0.5,2) {};
\node [vertex][label={[label distance=0.5cm]above:$A_{out}$}, label=below:$a_4$] (v9) at (1,3) {};
\node [vertex][label=below:$b_5$] (v7) at (-3.5,1) {};
\node [vertex][label=below:$a_3$] (v5) at (-2,0.5) {};
\node [green vertex][label=below:$b_3$] (v6) at (-0.5,0.5) {};
\node [vertex][label=below:$a_5$] (v10) at (1,1) {};

\draw [black line] (v1) edge (v2);
\draw [black line] (v3) edge (v4);
\draw [black line] (v5) edge (v6);

\draw [red line] (v1) edge (v7);
\draw [red line] (v3) edge (v6);

\draw [gray dash] (v8) edge (v5);
\draw [gray dash] (v3) edge (v8);

\draw [orange line] (v4) edge (v9);
\draw [orange line] (v6) edge (v10);
\end{tikzpicture}
\caption{Case 1: Query $V_3$ includes $\{b_1, b_2, b_3 \}$}
\end{subfigure}%
\begin{subfigure}{0.5\textwidth}
\centering
\begin{tikzpicture}[scale=0.7]
\tikzstyle{vertex} = [circle, fill=black, inner sep=0pt, minimum size=0.25cm]
\tikzstyle{green vertex} = [circle, fill=green, inner sep=0pt, minimum size=0.25cm]
\tikzstyle{blue dash} = [color=blue, line width=1.5pt, dashed]
\tikzstyle{gray dash} = [color=gray, line width=1.5pt, dashed]
\tikzstyle{red line} = [line width=1.5pt, color=red]
\tikzstyle{orange line} = [line width=1.5pt, color=orange]
\tikzstyle{black line} = [line width=1.5pt, color=black]

\node [vertex][label={[label distance=0.5cm]above:$B_{out}$}, label=below:$b_4$] (v8) at (-3.5,3) {};
\node [vertex][label=above:$A_{in}$, label=below:$a_1$] (v1) at (-2,3.5) {};
\node [green vertex][label=above:$B_{in}$, label=below:$b_1$] (v2) at (-0.5,3.5) {};
\node [vertex][label=below:$a_2$] (v3) at (-2,2) {};
\node [green vertex][label=below:$b_2$] (v4) at (-0.5,2) {};
\node [vertex][label={[label distance=0.5cm]above:$A_{out}$}, label=below:$a_4$] (v9) at (1,3) {};
\node [vertex][label=below:$b_5$] (v7) at (-3.5,1) {};
\node [vertex][label=below:$a_3$] (v5) at (-2,0.5) {};
\node [green vertex][label=below:$b_3$] (v6) at (-0.5,0.5) {};
\node [vertex][label=below:$a_5$] (v10) at (1,1) {};
\draw [black line] (v1) edge (v2);
\draw [black line] (v3) edge (v4);
\draw [black line] (v5) edge (v6);

\draw [red line] (v1) edge (v7);
\draw [red line] (v3) edge (v6);

\draw [gray dash] (v8) edge (v5);
\draw [gray dash] (v3) edge (v8);

\draw [orange line] (v4) edge (v9);
\draw [orange line] (v6) edge (v10);

\draw [blue dash] (v2) edge (v10);
\draw [blue dash] (v8) edge (v1);
\draw [blue dash] (v7) edge (v5);
\end{tikzpicture}
\caption{Case 1: blue dashed edges together with $a_4b_2, a_2b_3$ constitute a perfect matching}
\end{subfigure}\newline
\begin{subfigure}{0.5\textwidth}
\centering
\begin{tikzpicture}[scale=0.7]
\tikzstyle{vertex} = [circle, fill=black, inner sep=0pt, minimum size=0.25cm]
\tikzstyle{green vertex} = [circle, fill=green, inner sep=0pt, minimum size=0.25cm]
\tikzstyle{blue dash} = [color=blue, line width=1.5pt, dashed]
\tikzstyle{gray dash} = [color=gray, line width=1.5pt, dashed]
\tikzstyle{red line} = [line width=1.5pt, color=red]
\tikzstyle{orange line} = [line width=1.5pt, color=orange]
\tikzstyle{black line} = [line width=1.5pt, color=black]

\node [vertex][label={[label distance=0.5cm]above:$B_{out}$}, label=below:$b_4$] (v8) at (-3.5,3) {};
\node [green vertex][label=above:$A_{in}$, label=below:$a_1$] (v1) at (-2,3.5) {};
\node [vertex][label=above:$B_{in}$, label=below:$b_1$] (v2) at (-0.5,3.5) {};
\node [vertex][label=below:$a_2$] (v3) at (-2,2) {};
\node [green vertex][label=below:$b_2$] (v4) at (-0.5,2) {};
\node [vertex][label={[label distance=0.5cm]above:$A_{out}$}, label=below:$a_4$] (v9) at (1,3) {};
\node [vertex][label=below:$b_5$] (v7) at (-3.5,1) {};
\node [vertex][label=below:$a_3$] (v5) at (-2,0.5) {};
\node [green vertex][label=below:$b_3$] (v6) at (-0.5,0.5) {};
\node [vertex][label=below:$a_5$] (v10) at (1,1) {};
\draw [black line] (v1) edge (v2);
\draw [black line] (v3) edge (v4);
\draw [black line] (v5) edge (v6);

\draw [red line] (v1) edge (v7);
\draw [red line] (v3) edge (v6);

\draw [gray dash] (v8) edge (v5);
\draw [gray dash] (v3) edge (v8);

\draw [orange line] (v1) edge (v4);
\draw [gray dash] (v6) edge (v9);
\draw [gray dash] (v6) edge (v10);
\end{tikzpicture}
\caption{Case 2: Query $V_3$ includes $\{a_1, b_2, b_3 \}$}
\end{subfigure}%
\begin{subfigure}{0.5\textwidth}
\centering
\begin{tikzpicture}[scale=0.7]
\tikzstyle{vertex} = [circle, fill=black, inner sep=0pt, minimum size=0.25cm]
\tikzstyle{green vertex} = [circle, fill=green, inner sep=0pt, minimum size=0.25cm]
\tikzstyle{blue dash} = [color=blue, line width=1.5pt, dashed]
\tikzstyle{gray dash} = [color=gray, line width=1.5pt, dashed]
\tikzstyle{red line} = [line width=1.5pt, color=red]
\tikzstyle{orange line} = [line width=1.5pt, color=orange]
\tikzstyle{black line} = [line width=1.5pt, color=black]

\node [vertex][label={[label distance=0.5cm]above:$B_{out}$}, label=below:$b_4$] (v8) at (-3.5,3) {};
\node [green vertex][label=above:$A_{in}$, label=below:$a_1$] (v1) at (-2,3.5) {};
\node [vertex][label=above:$B_{in}$, label=below:$b_1$] (v2) at (-0.5,3.5) {};
\node [vertex][label=below:$a_2$] (v3) at (-2,2) {};
\node [green vertex][label=below:$b_2$] (v4) at (-0.5,2) {};
\node [vertex][label={[label distance=0.5cm]above:$A_{out}$}, label=below:$a_4$] (v9) at (1,3) {};
\node [vertex][label=below:$b_5$] (v7) at (-3.5,1) {};
\node [vertex][label=below:$a_3$] (v5) at (-2,0.5) {};
\node [green vertex][label=below:$b_3$] (v6) at (-0.5,0.5) {};
\node [vertex][label=below:$a_5$] (v10) at (1,1) {};
\draw [black line] (v1) edge (v2);
\draw [black line] (v3) edge (v4);
\draw [black line] (v5) edge (v6);

\draw [red line] (v1) edge (v7);
\draw [red line] (v3) edge (v6);

\draw [gray dash] (v8) edge (v5);
\draw [gray dash] (v3) edge (v8);

\draw [orange line] (v1) edge (v4);
\draw [gray dash] (v6) edge (v9);
\draw [gray dash] (v6) edge (v10);

\draw [blue dash] (v8) edge (v1);
\draw [blue dash] (v2) edge (v9);
\draw [blue dash] (v7) edge (v5);
\draw [blue dash] (v4) edge (v10);
\end{tikzpicture}
\caption{Case 2: blue dashed edges together with $a_2b_3$ constitute a perfect matching}
\end{subfigure}\newline
\begin{subfigure}{0.5\textwidth}
\centering
\begin{tikzpicture}[scale=0.7]
\tikzstyle{vertex} = [circle, fill=black, inner sep=0pt, minimum size=0.25cm]
\tikzstyle{green vertex} = [circle, fill=green, inner sep=0pt, minimum size=0.25cm]
\tikzstyle{blue dash} = [color=blue, line width=1.5pt, dashed]
\tikzstyle{gray dash} = [color=gray, line width=1.5pt, dashed]
\tikzstyle{red line} = [line width=1.5pt, color=red]
\tikzstyle{orange line} = [line width=1.5pt, color=orange]
\tikzstyle{black line} = [line width=1.5pt, color=black]

\node [vertex][label={[label distance=0.5cm]above:$B_{out}$}, label=below:$b_4$] (v8) at (-3.5,3) {};
\node [vertex][label=above:$A_{in}$, label=below:$a_1$] (v1) at (-2,3.5) {};
\node [green vertex][label=above:$B_{in}$, label=below:$b_1$] (v2) at (-0.5,3.5) {};
\node [vertex][label=below:$a_2$] (v3) at (-2,2) {};
\node [green vertex][label=below:$b_2$] (v4) at (-0.5,2) {};
\node [vertex][label={[label distance=0.5cm]above:$A_{out}$}, label=below:$a_4$] (v9) at (1,3) {};
\node [vertex][label=below:$b_5$] (v7) at (-3.5,1) {};
\node [green vertex][label=below:$a_3$] (v5) at (-2,0.5) {};
\node [vertex][label=below:$b_3$] (v6) at (-0.5,0.5) {};
\node [vertex][label=below:$a_5$] (v10) at (1,1) {};
\draw [black line] (v1) edge (v2);
\draw [black line] (v3) edge (v4);
\draw [black line] (v5) edge (v6);

\draw [red line] (v1) edge (v7);
\draw [red line] (v3) edge (v6);

\draw [gray dash] (v8) edge (v5);
\draw [gray dash] (v3) edge (v8);

\draw [gray dash] (v2) edge (v9);
\draw [gray dash] (v2) edge (v10);

\draw [orange line] (v4) edge (v5);
\end{tikzpicture}
\caption{Case 3: Query $V_3$ includes $\{b_1, b_2, a_3 \}$.}
\end{subfigure}%
\begin{subfigure}{0.5\textwidth}
\centering
\begin{tikzpicture}[scale=0.7]
\tikzstyle{vertex} = [circle, fill=black, inner sep=0pt, minimum size=0.25cm]
\tikzstyle{green vertex} = [circle, fill=green, inner sep=0pt, minimum size=0.25cm]
\tikzstyle{blue dash} = [color=blue, line width=1.5pt, dashed]
\tikzstyle{gray dash} = [color=gray, line width=1.5pt, dashed]
\tikzstyle{red line} = [line width=1.5pt, color=red]
\tikzstyle{orange line} = [line width=1.5pt, color=orange]
\tikzstyle{black line} = [line width=1.5pt, color=black]

\node [vertex][label={[label distance=0.5cm]above:$B_{out}$}, label=below:$b_4$] (v8) at (-3.5,3) {};
\node [vertex][label=above:$A_{in}$, label=below:$a_1$] (v1) at (-2,3.5) {};
\node [green vertex][label=above:$B_{in}$, label=below:$b_1$] (v2) at (-0.5,3.5) {};
\node [vertex][label=below:$a_2$] (v3) at (-2,2) {};
\node [green vertex][label=below:$b_2$] (v4) at (-0.5,2) {};
\node [vertex][label={[label distance=0.5cm]above:$A_{out}$}, label=below:$a_4$] (v9) at (1,3) {};
\node [vertex][label=below:$b_5$] (v7) at (-3.5,1) {};
\node [green vertex][label=below:$a_3$] (v5) at (-2,0.5) {};
\node [vertex][label=below:$b_3$] (v6) at (-0.5,0.5) {};
\node [vertex][label=below:$a_5$] (v10) at (1,1) {};
\draw [black line] (v1) edge (v2);
\draw [black line] (v3) edge (v4);
\draw [black line] (v5) edge (v6);

\draw [red line] (v1) edge (v7);
\draw [red line] (v3) edge (v6);

\draw [gray dash] (v8) edge (v5);
\draw [gray dash] (v3) edge (v8);

\draw [gray dash] (v2) edge (v9);
\draw [gray dash] (v2) edge (v10);

\draw [orange line] (v4) edge (v5);
\draw [blue dash] (v1) edge (v8);
\draw [blue dash] (v7) edge (v5);
\draw [blue dash] (v2) edge (v3);
\draw [blue dash] (v4) edge (v9);
\draw [blue dash] (v6) edge (v10);
\end{tikzpicture}
\caption{Case 3: blue dashed edges form a perfect matching}
\end{subfigure}
\caption{Round 3 cases. Green vertices are queried by the player in round 3. Red edges are in $E_2 \setminus E_1$, orange is $E_3 \setminus E_2$, grey is $F_3$. The blue dashed edges can be added to the graph to create a perfect matching.}
\label{fig: h3}
\end{figure} 

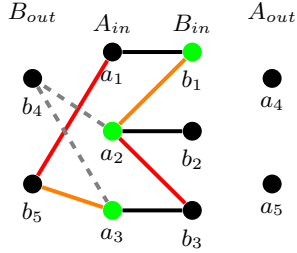
\begin{figure}[t] 
\centering
\begin{tikzpicture}[scale=0.7]
\tikzstyle{vertex} = [circle, fill=black, inner sep=0pt, minimum size=0.25cm]
\tikzstyle{green vertex} = [circle, fill=green, inner sep=0pt, minimum size=0.25cm]
\tikzstyle{blue dash} = [color=blue, line width=1.5pt, dashed]
\tikzstyle{gray dash} = [color=gray, line width=1.5pt, dashed]
\tikzstyle{red line} = [line width=1.5pt, color=red]
\tikzstyle{orange line} = [line width=1.5pt, color=orange]
\tikzstyle{black line} = [line width=1.5pt, color=black]

\node [vertex][label={[label distance=0.5cm]above:$B_{out}$}, label=below:$b_4$] (v8) at (-3.5,3) {};
\node [vertex][label=above:$A_{in}$, label=below:$a_1$] (v1) at (-2,3.5) {};
\node [green vertex][label=above:$B_{in}$, label=below:$b_1$] (v2) at (-0.5,3.5) {};
\node [green vertex][label=below:$a_2$] (v3) at (-2,2) {};
\node [vertex][label=below:$b_2$] (v4) at (-0.5,2) {};
\node [vertex][label={[label distance=0.5cm]above:$A_{out}$}, label=below:$a_4$] (v9) at (1,3) {};
\node [vertex][label=below:$b_5$] (v7) at (-3.5,1) {};
\node [green vertex][label=below:$a_3$] (v5) at (-2,0.5) {};
\node [vertex][label=below:$b_3$] (v6) at (-0.5,0.5) {};
\node [vertex][label=below:$a_5$] (v10) at (1,1) {};
\draw [black line] (v1) edge (v2);
\draw [black line] (v3) edge (v4);
\draw [black line] (v5) edge (v6);

\draw [red line] (v1) edge (v7);
\draw [red line] (v3) edge (v6);

\draw [gray dash] (v8) edge (v5);
\draw [gray dash] (v3) edge (v8);

\draw [orange line] (v3) edge (v2);
\draw [orange line] (v7) edge (v5);
\end{tikzpicture}
\caption{An example of how the oracle behaves when the player queries more vertices in $A_{in}$ than in $B_{in}$ during the third round. Green vertices are queried by the player. Red edges are in $E_2 \setminus E_1$, orange is $E_3 \setminus E_2$, gray is $F_3$. The player learns no edges incident to $A_{out}$ and can therefore only report a matching of size $3$.}
\label{fig: 3rnd more A_in}
\end{figure}

\begin{lemma}
\label{lem: 3rnd <3}
If the player queries fewer than three ``in''-vertices in the third round, then the player cannot output a matching of size larger than $3$.
\end{lemma}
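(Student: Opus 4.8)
The plan is the following. Exactly as in the discussion preceding Lemma~\ref{lem:exactly-3-in-vertices}, it suffices to describe the oracle's answer $M_3$ together with a structure graph $H_3$ that dominates $\tilde H_3$, admits a perfect matching, and satisfies $OPT(E_3)\le 3$; this last inequality caps the matching the player can report. Everything hinges on one observation: every edge of $E_2 = M \cup \{a_1b_5, a_2b_3\}$ has its $A$-endpoint in $A_{in}=\{a_1,a_2,a_3\}$, and, because $A_{out}\times B_{out}\subseteq F_2$, every edge of $G$ incident to $A_{out}$ has its $B$-endpoint in $B_{in}$. Hence if the oracle can answer with an $M_3$ containing no edge incident to $A_{out}$, then every edge of $E_3$ is incident to $A_{in}$ and $OPT(E_3)\le |A_{in}| = 3$. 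The only thing that can force an $A_{out}$-incident edge into $M_3$ is the maximality requirement, and only when $V_3$ contains both a vertex of $B_{in}$ and a vertex of $A_{out}$: a queried $B_{in}$-vertex that stays unmatched would otherwise be joinable to a queried, unmatched $A_{out}$-vertex.

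I would then case on the number of ``in''-vertices in $V_3$, which is $0$, $1$, or $2$ by Assumption~\ref{ass:2}, with the two ``in''-vertices (if two) not forming an $E_2$-edge. If there are none, every pair inside $V_3$ lies in $A_{out}\times B_{out}\subseteq F_2$ and $M_3=\emptyset$ works. If all queried ``in''-vertices lie in $A_{in}$ (one or two of them), the only new candidate edges are $A_{in}$--$B_{out}$ edges, and those from $a_2,a_3$ to $b_4$ are already in $F_2$; the oracle returns any maximal matching among the remaining candidates and needs no new non-edges, so all edges of $E_3$ stay incident to $A_{in}$. If $V_3$ contains one vertex $a_i\in A_{in}$ and one vertex $b_j\in B_{in}$, so that $(i,j)\in\{(1,2),(1,3),(2,1),(3,1),(3,2)\}$ by Assumption~\ref{ass:2}, the oracle returns $M_3=\{a_ib_j\}$: now $b_j$ is matched, the remaining unmatched queried vertices all lie in $A_{out}\cup B_{out}$ and need no new non-edge, $a_ib_j$ is incident to $A_{in}$, and $H_3$ keeps the perfect matching of $H_2$.

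The two delicate sub-cases are those in which $V_3$ contains a $B_{in}$-vertex the oracle cannot match inside $A_{in}$. If $V_3$ contains exactly one ``in''-vertex $b_j\in B_{in}$: if $b_j\in\{b_2,b_3\}$ the oracle may even return $M_3=\{ab_j\}$ for any queried $a\in A_{out}$, because no size-$4$ matching of $E_2\cup\{ab_j\}$ exists — using $N_{E_2}(a_3)=\{b_3\}$, which forces $a_3b_3$, and then $N_{E_2}(a_2)=\{b_2,b_3\}$, which forces $a_2b_2$, one sees the three $A_{in}$-vertices cannot be matched in $E_2$ while avoiding $b_j$; if $b_j=b_1$, the oracle returns $M_3=\emptyset$, declares $b_1a_4,b_1a_5$ (those queried) non-edges, and uses the perfect matching $\{a_1b_4,a_2b_1,a_3b_5,a_4b_2,a_5b_3\}$ of $H_3$. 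If $V_3$ contains two ``in''-vertices $\{b_j,b_{j'}\}\subseteq B_{in}$, note every such pair contains a vertex of $\{b_2,b_3\}$; picking $b_{j'}\in\{b_2,b_3\}$, the oracle returns $M_3=\{ab_{j'}\}$ for a queried $a\in A_{out}$ (or $M_3=\emptyset$ if $A_{out}\cap V_3=\emptyset$) and declares $b_ja'$ a non-edge for the remaining queried $a'\in A_{out}$, if any. Again $OPT(E_2\cup\{ab_{j'}\})=3$ by the same neighbourhood argument, and a perfect matching of $H_3$ is obtained by sending $a\to b_{j'}$, $a'\to B_{in}\setminus\{b_j,b_{j'}\}$ (a single vertex), $b_4\to a_1$, and matching $a_2,a_3$ to the two leftover $B$-vertices (possible since those pairs avoid $F_2$). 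In every case $OPT(E_3)\le 3$, so the player cannot output a matching of size larger than $3$.

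The main obstacle is precisely the last sub-case $\{b_j,b_{j'}\}\subseteq B_{in}$: there the oracle cannot answer with the empty matching and simply forbid the threatening pairs, since declaring all of $\{b_j,b_{j'}\}\times(A_{out}\cap V_3)$ to be non-edges would strand both $A_{out}$-vertices on the single remaining $B_{in}$-vertex and destroy the perfect matching of $H_3$. The resolution exploits the asymmetry of $H_2$ — that $a_3$ has no $E_2$-neighbour besides $b_3$, which pins down any large matching — so that revealing a single $A_{out}$-incident edge to $b_2$ or $b_3$ is safe, and it uses the oracle's freedom to choose which of the two queried $B_{in}$-vertices to protect so that a perfect matching survives.
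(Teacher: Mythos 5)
Your proof is correct and follows essentially the same route as the paper: the same structure-graph/domination framework, a case analysis on how many ``in''-vertices appear in $V_3$, and the observation that keeping (almost) all revealed edges incident to $A_{in}$ caps $OPT(E_3)$ at $3$, with the $b_1$ sub-case handled exactly as in the paper via the non-edges $a_4b_1,a_5b_1$. The only divergence is the two-$B_{in}$-vertex sub-case, where the paper simply reuses the structure graph of Case~1 of Lemma~\ref{lem:exactly-3-in-vertices} (revealing both $a_4b_2$ and $a_5b_3$) while you reveal at most one $A_{out}$-incident edge to a vertex of $\{b_2,b_3\}$ and add one non-edge; both choices are valid, and your write-up is if anything more explicit about maximality and the surviving perfect matching.
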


\begin{proof}
We distinguish the following cases:
\begin{enumerate}
\item If the player queries no ``in'' vertices, this is obvious, and we would have $M_3 = \emptyset$.
\item If the player queries exactly one ``in'' vertex, the only possible way to obtain a larger matching than one of size $3$ 
is to find an edge incident to $b_1$, i.e., by querying $b_1$, but we can define $F_3 = A_{out} \times B_{out} \cup F_2 \cup \{ a_4b_1, a_5b_1 \}$ and then $M_3 = \emptyset$. 
\item If the player queries one vertex in $A_{in}$ and one in $B_{in}$, we can connect them by an edge, say $e$, and then $M_3 = \{e\}$ does not help increasing the size of a matching.
\item If the player queries two vertices in $A_{in}$, the player will not be able to learn any edges to vertices in $A_{out}$, and so $A_{out}$ 
remains unmatched, which implies that the player cannot return a matching of size larger than $3$.
\item If the player queries two vertices in $B_{in}$, the oracle defines $H_3$ as in Case~1 of Lemma~\ref{lem:exactly-3-in-vertices}, and the matching returned is a subset of $E_3$.
\end{enumerate}
\end{proof}



Hence we have shown that no matter what queries are made in the second and third rounds, the player cannot increase the size of the matching learned within the $10$-vertex subgraph. This then holds for a graph with $|A|=|B|=n$ where $5|n$ and the theorem follows.

\begin{theorem}
The best approximation factor achievable in three rounds is $3/5$.
\end{theorem}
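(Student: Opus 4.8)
The plan is to derive the theorem by assembling the round‑by‑round domination lemmas already proved for the $10$‑vertex gadget into a single adversary argument, lifting it to arbitrary $n$ by taking disjoint copies, and then invoking \textsc{3RoundMatch} for the matching upper bound. Concretely, fix $V=A\,\dot{\cup}\,B$ with $|A|=|B|=5$ and let the oracle play as prescribed in this subsection. Using Assumption~\ref{ass:1} I may grant the player full knowledge of the oracle's structure graph after each round, and using Assumption~\ref{ass:2} I may restrict attention to queries containing at most three ``in''-vertices. The first‑query domination lemma gives a structure graph isomorphic to $H_1$ dominating $\tilde{H}_1$; Lemmas~\ref{lem: 2rnd 3in} and~\ref{lem: 2rnd <3} together cover the cases of exactly three and of fewer than three ``in''-vertices in the second query, so that in every case the oracle ends round~$2$ holding a structure graph isomorphic to $H_2$ that dominates $\tilde{H}_2$. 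Finally, Lemmas~\ref{lem:exactly-3-in-vertices} and~\ref{lem: 3rnd <3} show that for every possible third query the oracle can commit to a valid structure graph $H_3$ — valid meaning it still admits a perfect matching, as witnessed by the blue matchings in Figure~\ref{fig: h3} — with $OPT(E_3)\le 3$.

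I would then close the single‑gadget bound as follows. Since $H_3$ dominates $\tilde{H}_3$, the player's largest matching assembled from $\bigcup_i M_i$ has size at most $OPT(E_3)\le 3$, whereas the oracle's graph has a perfect matching of size $5$; hence the approximation ratio on this gadget is at most $\frac{2\cdot 3}{10}=\frac{3}{5}$. To obtain the statement for arbitrary $n$ with $5\mid n$, I would take $n/10$ vertex‑disjoint copies of the gadget. Any query $V_i\subseteq V$ restricts to an independent query on each copy, and because the oracle's graph contains no edges between copies, the oracle answers each restriction independently by the single‑gadget strategy; the per‑copy maximal matchings paste together into a maximal matching of $G[V_i]$, and streaming‑consistency is preserved by concatenating the per‑copy edge orderings into one stream $\pi$. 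The player therefore learns a matching of size at most $3\cdot (n/10)=3n/10$ against a perfect matching of size $n/2$, which is a $3/5$‑approximation; this matches the lower bound claimed.

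For optimality I would observe that \textsc{3RoundMatch}, as recalled in the introduction, is precisely a three‑round vertex‑query strategy: it queries $V_1=A\cup B$, then $V_2=A(M)\cup\overline{B(M)}$, then $V_3=\overline{A(M)}\cup B'$, each a vertex‑induced subgraph, and the edges returned (being runs of \textsc{Greedy}) contain a $0.6$‑approximate matching by the analysis of Kale and Tirodkar~\cite{kt17}. Hence $3/5$ is simultaneously achievable and, by the construction above, an upper bound on what any three‑round strategy can guarantee, and the theorem follows.

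The only genuinely delicate point I anticipate is the disjoint‑copies reduction rather than the case analysis (which is already discharged by Lemmas~\ref{lem: 2rnd 3in}--\ref{lem: 3rnd <3}): one must check that the adversary's per‑copy answers really do combine into responses that are (i) maximal in each queried subgraph $G[V_i]$ and (ii) jointly realizable by a \emph{single} edge‑ordering $\pi$, so that the resulting lower bound holds against streaming‑consistent oracles and not merely against the abstract vertex‑query oracle. Everything else is bookkeeping over the finitely many query types already enumerated.
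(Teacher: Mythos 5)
Your proposal is correct and follows essentially the same route as the paper: the paper's proof of this theorem is precisely the assembly of the domination lemmas for the first and second queries with Lemmas~\ref{lem:exactly-3-in-vertices} and~\ref{lem: 3rnd <3} for the third, giving $OPT(E_3)\le 3$ against a perfect matching of size $5$ per gadget, lifted to general $n$ by disjoint copies and matched by \textsc{3RoundMatch}. Your extra care about per-copy maximality and a single consistent stream $\pi$ is a reasonable elaboration of a step the paper treats as immediate.
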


\section{$(1-\epsilon)$-approximation in Bipartite Graphs Requires $\Omega(\frac{1}{\epsilon})$ Rounds} \label{sec:multi-rounds}
Let $G_c=(A, B, E)$ with $A=B=[c]$ be the {\em semi-complete} graph on $2c$ vertices, i.e., vertices $a \in A$ and $b \in B$
are connected if and only if $b \ge a$. Observe that $G_c$ has a unique perfect matching $M^* = \{(i, i) \in E \ | \ i \in [c] \}$.

Let $G$ be the disjoint union of $n/(2c)$ copies of $G_c$ (assuming for simplicity that $n$ is a multiple of $2c$). We will refer to 
a copy of $G_c$ in $G$ as a gadget. We now show that computing a $(1-\epsilon)$-approximation requires $\Omega(\frac{1}{\epsilon})$ queries on $G$.

\begin{theorem}
 Any query algorithm with approximation factor $1-\epsilon$ requires at least $\frac{1}{\epsilon} - 1$ queries, even in
 bipartite graphs.
\end{theorem}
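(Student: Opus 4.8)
The plan is to argue gadget by gadget. Since the copies of $G_c$ are vertex-disjoint, the edges revealed in any round split into their restrictions to the gadgets, the player's final matching is the union of its restrictions, and a perfect matching of $G$ uses exactly $c$ edges inside each gadget; so it suffices to give an oracle strategy guaranteeing that, in every gadget, the committed structure graph $H_q=(A,B,E_q,F_q)$ after $q$ rounds satisfies $OPT(E_q)\le\frac{q}{q+1}\,c$. This makes the overall ratio at most $\frac{q}{q+1}=1-\frac{1}{q+1}$, which is below $1-\epsilon$ whenever $q<\frac1\epsilon-1$. We choose $c$ a multiple of $\mathrm{lcm}(2,\dots,\lceil 1/\epsilon\rceil)$, so $q+1$ divides $c$ for all relevant $q$. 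By Observation~\ref{obs:queries} and Assumption~\ref{ass:2} we only need to handle queries containing no previously revealed edge, and then the construction of Section~\ref{sec:prelim} lets the oracle answer a query $V_i$ by any maximal matching of the currently committed subgraph induced on $V_i$.

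The invariant maintained in a gadget is that, up to relabeling $[c]$ on each side by order-preserving bijections, $H_q$ is a ``$(q+1)$-block semi-complete graph'': the index set is partitioned on each side into $q+1$ consecutive blocks $A_1,\dots,A_{q+1}$ and $B_1,\dots,B_{q+1}$ with the first $A$-block and last $B$-block of size at least $c/(q+1)$; $F_q\supseteq\bigcup_{s>t}A_s\times B_t$; every revealed edge lies in the strictly block-upper-triangular part $\bigcup_{s<t}A_s\times B_t$; and inside each diagonal block pair $A_s\times B_s$ nothing has been revealed while the true diagonal $\{(i,i):i\in A_s\}$ is still present. Such an $H_q$ is a legal structure graph: the diagonal $M^*=\{(i,i):i\in[c]\}$ is disjoint from $F_q$ and is a perfect matching of $(A,B,E_q\cup M^*)$. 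And in $E_q$ the last $A$-block and the first $B$-block are incident to no revealed edge, so $OPT(E_q)\le c-c/(q+1)=\frac{q}{q+1}c$, the bound we want. For $q=1$ this structure is exactly the $H_1$ of Section~\ref{sec:few-rounds} (low $A$ matched to high $B$, with $A_{out}\times B_{out}$ forbidden), and the argument of Lemma~\ref{lem:domination} shows the first query can be answered inside it.

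The crux is the inductive step: given that the gadget carries the $q$-block structure, upon query $V_{q+1}$ the oracle (i) splits one block into two, turning the structure into a $(q+1)$-block semi-complete graph, (ii) commits the newly created block-lower-triangular pairs as non-edges, and (iii) returns a maximal matching of the refined induced subgraph on $V_{q+1}$ whose new edges lie in the new strictly-upper-triangular part. Which block is split and where is dictated by $V_{q+1}$, in the spirit of Lemma~\ref{lem:domination}: one locates a block inside which the low- and high-indexed parts of $V_{q+1}$ can be separated, splits there, blanks out the resulting lower-left quadrant, and answers with an anti-diagonal-type minimum maximal matching of whatever edges of $V_{q+1}$ survive the commitment; these surviving edges sit in complete bipartite (upper-triangular) block pairs, so such a matching exists, it is maximal because the only would-be augmenting edges now lie in the forbidden quadrant, and it is short enough that the largest revealed matching in the refined structure stays at $\frac{q+1}{q+2}c$. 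Checking that one split always suffices, for every $V_{q+1}$, is where the case analysis lives.

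I expect this verification to be the main difficulty, and it is squeezed from two sides. The oracle may not commit too few non-edges: if it left a whole complete bipartite block fully visible inside $V_{q+1}$ it could be forced to hand over a perfect matching of that block, and such edges can glue earlier length-two alternating paths into length-$3$ augmenting paths (the mechanism behind the three-round $0.6$-approximation \textsc{3RoundMatch}), pushing $OPT(E_{q+1})$ past $\frac{q+1}{q+2}c$. But it also may not commit too many: each diagonal edge $(i,i)$ is protected, since removing any of them destroys the gadget's unique perfect matching, so if the oracle blanked out every non-diagonal edge inside some queried block it would be forced to reveal that block's entire diagonal in a single round, again breaking the bound. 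The block-semi-complete family above is engineered to pass between these two failure modes; proving that a single refinement step always keeps the structure legal, the answer maximal and diagonal-avoiding, and the matching short is the technical heart. Streaming-consistency needs no extra work: it is immediate from the generic oracle construction of Section~\ref{sec:prelim} together with Assumption~\ref{ass:2}.
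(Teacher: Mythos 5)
There is a genuine gap: the entire proof rests on the inductive step you yourself defer ("checking that one split always suffices\ldots is where the case analysis lives"), and that step is both unproven and far from clearly true. The danger is exactly the one your plan does not resolve: a query that places vertices on both sides of \emph{every} diagonal block (e.g.\ all of $A_s\cup B_s$ for each $s$, or the two endpoints $a_i,b_i$ of an unrevealed diagonal pair in each block) forces the oracle, by maximality plus the requirement that the diagonal survive, either to reveal edges inside several diagonal blocks at once or to commit non-edges there; within your invariant that means splitting several blocks in a single round. If the number of blocks can double per round rather than grow by one, your bound degrades from $OPT(E_q)\le\frac{q}{q+1}c$ to roughly $1-2^{-q}$, which only yields an $\Omega(\log\frac1\epsilon)$ lower bound and does not prove the theorem. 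A related unaddressed point is that you let the query dictate where the split happens, yet the $OPT$ bound needs one specific extreme block to stay of size at least $c/(q+1)$; you give no argument that the player cannot force that particular block to shrink geometrically. (There is also an internal slip: the invariant asks that the \emph{first} $A$-block and \emph{last} $B$-block be large, but the bound $OPT(E_q)\le c-c/(q+1)$ uses that the \emph{last} $A$-block and \emph{first} $B$-block are large and edge-free.) So what you have is a plausible program whose technical heart is missing, not a proof.

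For comparison, the paper's argument avoids adaptive structure-graph maintenance entirely. It fixes $c=\frac1\epsilon-1$ and uses a non-adaptive oracle on the fixed semi-complete gadget: sort the queried vertices $a_1<\dots<a_k$, $b_1<\dots<b_\ell$ and return the anti-diagonal matching $\{a_ib_{\ell+1-i}\}\cap E$. Two short monotonicity arguments show this answer is maximal and contains at most one edge of the unique perfect matching $M^*$. Since a $(1-\epsilon)$-approximation forces a perfect matching in at least one gadget (missing even one edge per gadget already caps the ratio at $\frac{c-1}{c}<1-\epsilon$), and a perfect matching must be $M^*$ itself, at least $c$ queries are needed. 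If you want to salvage your route, you would need to either prove the one-split-per-round claim with a full case analysis (including queries hitting all diagonal blocks and adversarial split locations), or abandon the per-round $OPT$ bound in favour of a per-round bound on how many designated perfect-matching edges can be revealed, which is essentially the paper's simpler invariant.
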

\begin{proof}
Let $c = \frac{1}{\epsilon} - 1$. We consider the graph $G$. First, suppose that the algorithm does not compute a perfect matching 
in any of the $n/(2c)$ gadgets. Then, the computed matching is of size at most $\frac{c-1}{c} \frac{n}{2}$ and thus constitutes at best 
a $\frac{c-1}{c} = 1 - \frac{\epsilon}{1-\epsilon} < 1 - \epsilon$ approximation.
The algorithm therefore needs to compute a perfect matching in at least one gadget. Since all gadgets are disjoint, we now argue
that it requires at least $c$ queries in order to compute a perfect matching in one gadget. 
Consider thus the gadget $G_c$ and denote by $M^*$ the perfect matching in $G_c$. 
We claim that each query may produce at most one edge of the perfect matching $M^*$ in $G_c$:

Indeed, let $A' = \{a_1, a_2, \dots, a_k \} \subseteq A$ and $B' = \{b_1, b_2, \dots, b_{\ell} \} \subseteq B$ be so that $A' \cup B'$ 
is any query submitted to the oracle. Further, suppose that $a_1 < a_2 < \dots < a_k$ and $b_1 < b_2 < \dots < b_{\ell}$. 
The oracle will return the following matching $M$: 
$$M = \{ a_i b_{\ell + 1 - i} \ | \ i \in [\min \{k, \ell \} ] \} \cap E \ .$$
We will now argue that $M$ is maximal and $|M \cap M^*| \le 1$. To this end, let $j$ be the largest index such that
$a_j b_{\ell + 1 - j} \in E$, which is equivalent to $j$ being the largest index so that $a_j \le b_{\ell + 1 - j}$. 
Observe that since the $(a_i)_i$ and $(b_i)_i$ are increasing, we have $a_{j'} b_{\ell + 1 - j'} \in E \Leftrightarrow j' \le j$, which also implies that vertices $a_{j'}$ are matched,
for every $j' \le j$. Consider now a vertex $a_{q}$, for some $q > j$. Since $a_{j+1} > b_{\ell -j}$ and $a_q \ge a_{j+1}$, it follows
that there is no edge between $a_q$ and any of the unmatched $B'$-vertices $\{b_1, b_2, \dots, b_{\ell - j} \}$. This implies that the 
matching $M$ is maximal. Next, suppose that $M$ contains at least one edge from $M^*$ and let $q$ be the smallest 
index such that $a_q = b_{\ell + 1 - q}$, i.e., $(a_q, b_{\ell + 1 - q}) \in M^*$. 
Then, for any $q' > q$, we have $$a_{q'} > a_q = b_{\ell + 1 - q} > b_{\ell + 1 - q'} \ ,$$
which implies that $a_{q'} \neq b_{\ell + 1 - q'}$. Hence, at most one edge from $M^*$ is returned per query.

Last, we argue that the oracle can be made streaming-consistent: Consider any ordering of the edges so that edge $ij$ arrives 
before edge $ik$, for every $k < j$.
\end{proof}

Using the oracle described in the previous proof on a single gadget $G_{n/2}$, we obtain the following corollary:

\begin{corollary}
 Any query algorithm that produces a maximum matching requires at least $n/2$ queries (on a graph on $n$ vertices), even on bipartite graphs.
\end{corollary}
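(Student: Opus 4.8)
The plan is to instantiate the construction from the proof of the preceding theorem with a \emph{single} gadget. Take $G = G_{n/2}$, the semi-complete bipartite graph on the vertex set $A \ \dot{\cup}\ B$ with $A = B = [n/2]$: this graph is bipartite, has $n$ vertices, and — as already observed — possesses the unique perfect matching $M^* = \{(i,i) : i \in [n/2]\}$. Since $G$ has a perfect matching, $M^*$ is in fact its unique maximum matching, so any player that outputs a maximum matching must output exactly $M^*$, and in particular must have learned every one of its $n/2$ edges among $\bigcup_{i \le r} M_i$.

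Next I would reuse the oracle from the theorem's proof verbatim: on query $A' \cup B'$ with $A' = \{a_1 < \dots < a_k\} \subseteq A$ and $B' = \{b_1 < \dots < b_\ell\} \subseteq B$, it returns $M = \{ a_i b_{\ell+1-i} : i \in [\min\{k,\ell\}] \} \cap E$. The proof of the theorem already establishes the two facts needed here: (i) $M$ is a maximal matching in $G[A' \cup B']$, so the oracle's answer is legal; and (ii) $|M \cap M^*| \le 1$, i.e.\ each query reveals at most one edge of the unique perfect matching. Consequently, after $r$ queries the set $\bigcup_{i \le r} M_i$ contains at most $r$ edges of $M^*$, so in order for the player to be able to report $M^*$ we must have $r \ge |M^*| = n/2$, which is the claimed bound.

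Finally I would note that streaming-consistency is inherited from the same construction: ordering the edges of $G$ so that edge $ij$ precedes edge $ik$ whenever $k < j$ makes each returned matching $M_i$ a run of \textsc{Greedy} on the corresponding substream of $G[A' \cup B']$, exactly as argued for the theorem. There is essentially no new obstacle — the corollary is a direct specialization of the theorem to $c = n/2$ — and the only point requiring (minor) care is the observation that uniqueness of the perfect matching upgrades ``the output is a maximum matching'' to ``the output equals $M^*$'', which is precisely what forces all $n/2$ edges to be learned rather than merely an $\Omega(n)$ fraction of them.
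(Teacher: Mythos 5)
Your proposal is correct and is exactly the argument the paper intends: it instantiates the oracle from the preceding theorem on a single gadget $G_{n/2}$, uses that each query reveals at most one edge of the unique perfect matching $M^*$, and concludes that $n/2$ queries are needed, with streaming-consistency inherited from the same edge ordering. The only detail you add beyond the paper's one-line derivation — that uniqueness of the perfect matching forces the output to equal $M^*$ — is a correct and appropriate filling-in of the implicit step.
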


\section{Improving on $1/2$ in General Graphs Requires $\Omega(n)$ Queries} \label{sec:general-graphs}
Let $G$ be a {\em bomb graph} on $n$ ($n$ even) vertices $U \cup V$ with $|U| = |V| = [n/2]$, where $G[V]$ 
is a clique, $G[U]$ is an independent set, and $u \in U$ and $v \in V$
are connected if and only if $u = v$ ($U$ and $V$ are connected via a perfect matching). Denote by $M^*$ the perfect matching between $U$ and $V$
and by $C$ the edges of the clique $G[V]$.

In the next lemma, we show that any large matching in $G$ must contain a large number of edges from $M^*$.

\begin{lemma}\label{lem:bomb-graph}
 Let $M$ be a matching in $G$. Then: $|M|  \le \frac{n}{4} + \frac{1}{2} |M \cap M^*| \ .$
\end{lemma}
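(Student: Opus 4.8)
The plan is to prove the inequality by a direct counting argument on the vertices of $V$ consumed by the edges of $M$, exploiting the very rigid structure of the bomb graph. First I would classify the edges of $M$ by type. Since $G[U]$ is an independent set, no edge of $M$ lies inside $U$; consequently every edge of $M$ is either a clique edge (an edge of $C = E(G[V])$) or an edge of the perfect matching $M^*$ between $U$ and $V$. Writing $M_C = M \cap C$ and $M_\ast = M \cap M^*$, this gives a partition $M = M_C \,\dot\cup\, M_\ast$, so in particular $|M| = |M_C| + |M_\ast|$.

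Next I would count how many vertices of $V$ each part uses. Every edge of $M_C$ has both endpoints in $V$, and every edge of $M_\ast$ has exactly one endpoint in $V$ (its other endpoint lies in $U$). Because $M$ is a matching, all these $V$-endpoints are pairwise distinct, so the total number of $V$-vertices touched by $M$ is $2|M_C| + |M_\ast|$, and this cannot exceed $|V| = n/2$. Hence $2|M_C| + |M_\ast| \le \tfrac{n}{2}$.

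Finally I would combine the two facts: $|M| = |M_C| + |M_\ast| = \tfrac12\bigl(2|M_C| + |M_\ast|\bigr) + \tfrac12 |M_\ast| \le \tfrac12 \cdot \tfrac{n}{2} + \tfrac12 |M_\ast| = \tfrac{n}{4} + \tfrac12 |M \cap M^*|$, which is the claimed bound. There is no genuine obstacle here; the only point requiring a moment's care is checking that the classification of $M$'s edges into clique edges and $M^*$-edges is exhaustive, which is immediate from the fact that $U$ spans no edges, and that the $V$-endpoints are distinct, which is immediate from $M$ being a matching.
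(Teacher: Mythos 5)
Your proof is correct and follows essentially the same route as the paper: both arguments split $M$ into its clique edges $M \cap C$ and its $M^*$-edges, bound the number of clique edges by counting the $V$-vertices left over (your inequality $2|M_C| + |M_\ast| \le n/2$ is just a rearrangement of the paper's bound $|M \cap C| \le (n/2 - |M \cap M^*|)/2$), and then combine. No issues.
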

\begin{proof}
 Observe that $|M| = |M \cap M^*| + |M \cap C|$, and since there are $n/2 - |M \cap M^*|$ vertices in $V$ that are not matched
 to a vertex in $U$, we have $|M \cap C| \le (n/2 - |M \cap M^*|)/2$. Hence:
 $$|M| = |M \cap M^*| + |M \cap C| \le |M \cap M^*| + (n/2 - |M \cap M^*|)/2 = \frac{n}{4} + \frac{1}{2} |M \cap M^*| \ . $$
\end{proof}

\begin{theorem}
 Any $r$-round query algorithm on general graphs has approximation ratio at most $\frac{1}{2} + \frac{r}{n}$ (on an $n$-vertex input graph).
\end{theorem}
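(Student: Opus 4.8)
The plan is to show that the oracle can answer any query so that at most one new edge of $M^*$ is revealed per round, which by Lemma~\ref{lem:bomb-graph} caps the player's matching at $\frac{n}{4} + \frac{r}{2}$ after $r$ rounds, i.e. an approximation ratio of at most $\frac{1}{2} + \frac{r}{n}$. So the heart of the argument is an oracle strategy for the bomb graph.

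First I would set up the bookkeeping: the oracle maintains the set $M^* \cap \left(\bigcup_{j<i} M_j\right)$ of already-revealed matching edges, and when query $V_i = U_i \cup V_i'$ arrives (with $U_i \subseteq U$, $V_i' \subseteq V$), it must return a maximal matching in $G[V_i]$. The key structural observation is that $G[V_i]$ consists of the clique on $V_i'$, the independent set $U_i$, and the matching edges $uv$ with $u \in U_i$, $v \in V_i'$. A maximal matching here will match (almost) all of $V_i'$ internally within the clique, leaving at most one clique vertex, and only that leftover vertex — if it happens to lie in $U_i$ as well via its $M^*$-partner — can force an $M^*$-edge. Concretely, if $|V_i'|$ is even the oracle perfectly matches $V_i'$ inside the clique $C$ and returns no $M^*$-edge at all; if $|V_i'|$ is odd, it matches all but one vertex $v^\star$ of $V_i'$ inside the clique, chosen so that $v^\star$'s partner $u^\star$ is \emph{not} in $U_i$ if possible, in which case again no $M^*$-edge is used; only when \emph{every} vertex of $V_i'$ has its $M^*$-partner in $U_i$ and $|V_i'|$ is odd is the oracle forced to include a single edge $u^\star v^\star \in M^*$ to achieve maximality. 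Either way at most one edge of $M^*$ is produced, and I should double-check maximality: the unmatched $U_i$-vertices form an independent set and have no clique edges, and the at-most-one unmatched $V_i'$-vertex (present only in the odd case) has its sole neighbour in $U_i$ already matched, so nothing can be added.

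Next I would handle streaming-consistency using the construction template from the preliminaries: place the clique edges used in round $1$ first, and for later rounds run \textsc{Greedy} on the substream restricted to $G[V_i]$ — since the only edges in the stream are clique edges and the at most $r$ revealed $M^*$-edges, and \textsc{Greedy} on a clique restricted to $V_i'$ naturally yields a near-perfect internal matching, I can order the stream so that clique edges come before $M^*$-edges and the \textsc{Greedy} run reproduces exactly the matching described above; any new $M^*$-edge is appended at the end. By Observation~\ref{obs:queries} / Assumption~\ref{ass:2} I may also assume the player never re-queries both endpoints of an already-revealed edge, which only simplifies the case analysis.

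Finally I would assemble the bound: after $r$ rounds, $\left|\bigcup_{j \le r} M_j \cap M^*\right| \le r$, so for the player's reported matching $M_P \subseteq \bigcup_{j\le r} M_j$ we get $|M_P \cap M^*| \le r$, and Lemma~\ref{lem:bomb-graph} gives $|M_P| \le \frac{n}{4} + \frac{r}{2}$; dividing by $\frac{n}{2}$ yields approximation ratio at most $\frac{1}{2} + \frac{r}{n}$. I expect the main obstacle to be the careful verification that the oracle's chosen matching is genuinely maximal in \emph{every} case of the parity/partner analysis — in particular arguing that when $|V_i'|$ is odd one can always avoid an $M^*$-edge unless forced — together with checking that this choice is simultaneously realizable by a single fixed stream ordering across all $r$ rounds (streaming-consistency), rather than requiring the stream to depend on which queries arrive.
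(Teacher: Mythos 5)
Your proposal is correct and follows essentially the same route as the paper: the same bomb-graph oracle that pairs the queried clique vertices internally, leaving at most one vertex that can force at most one edge of $M^*$ per round, combined with Lemma~\ref{lem:bomb-graph} and a stream ordering with clique edges before $M^*$-edges. Your extra refinement of choosing the leftover clique vertex to avoid its $M^*$-partner when possible is harmless but unnecessary, since the bound only needs at most one $M^*$-edge per query, which the paper's simpler rule (match the arbitrary leftover vertex to its partner only if that partner was queried) already guarantees.
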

\begin{proof}
 Consider an arbitrary query $U' \cup V'$ so that $U' \subseteq U$ and $V' \subseteq V$. The oracle returns the following matching:
 First, the oracle arbitrarily pairs up all vertices of $V'$ except possibly one in case $|V'|$ is odd. Let $M$ denote this matching. 
 If $|V'|$ is even then $M$ is returned. Suppose now that $|V'|$ is odd and let $v \in V'$ be the vertex that is not matched in $M$.
 Then, if $v$'s partner $u \in U$ in $M^*$ is contained in $U'$, then return $M \cup \{uv \}$, otherwise return $M$. 
 
 It is easy to see that, by construction, the returned matching is maximal and contains at most one edge from $M^*$. 
 Hence, in $r$-rounds the algorithm can learn at most $r$ edges from $M^*$. By Lemma~\ref{lem:bomb-graph}, the returned matching is
 therefore of size at most $\frac{n}{4} + \frac{1}{2}r$, which constitutes a $\frac{1}{2} + \frac{r}{n}$-approximation. 
 
 The oracle can be made streaming-consistent: Consider any edge order where we first have edges $C$ in arbitrary order 
 followed by $M^*$ in arbitrary order.
\end{proof}

\section{Conclusion} \label{sec:conclusion}
In this paper, we introduced a new query model that allows us to prove lower bounds for streaming algorithms for \textsf{Maximum Matching}
that repeatedly run the \textsc{Greedy} matching algorithm on a vertex-induced subgraph of the input graph. We showed that the
three rounds algorithm \textsc{3RoundMatch} with approximation factor $0.6$ is optimal for this class of algorithms. 
We also showed that computing a $(1-\epsilon)$-approximation in bipartite graphs requires $\Omega(\frac{1}{\epsilon})$ rounds,
and computing an approximation strictly better than $\frac{1}{2}$ in general graphs requires $\Omega(n)$ rounds. We conclude with open questions:
\begin{itemize}
 \item Can we prove that computing a maximum matching in the vertex-query model in bipartite graphs requires $\Omega(n^2)$ rounds, or 
 is there an algorithm that requires only $o(n^2)$ rounds?
 \item Can we prove a $\Omega(\frac{1}{\epsilon^2})$ lower bound for computing a $(1-\epsilon)$-approximation in bipartite graphs?
\end{itemize}



\bibliography{kk20}

\appendix

\end{document}